\setlist[itemize]{noitemsep, topsep=0pt}
\newtheorem{claim}[theorem]{Claim}
\newtheorem{observation}[theorem]{Observation}
\newtheorem{corollary}[theorem]{Corollary}
\newcommand{\dist}{\ensuremath{\operatorname{dist}}}
\title{On the complexity of embedding in graph products\thanks{Work initiated at the Workshop on Graph Product Structure Theory (BIRS21w5235) at the Banff International Research Station, November 21-26, 2021.}} \author{Therese Biedl\thanks{David R.~Cheriton School of Computer Science, University of Waterloo.   Supported by NSERC.} \and David Eppstein\thanks{Department of Computer Science, University of California, Irvine. Research supported in part by NSF grant CCF-2212129.} \and Torsten Ueckerdt\thanks{Institute of Theoretical Informatics, Karlsruhe Institute of Technology}} 
\date{}
\begin{document}

\maketitle

\begin{abstract}
Graph embedding, especially as a subgraph of a grid, is an old topic in VLSI design and graph drawing.   In this paper, we investigate related questions relating to the complexity of embedding a graph~$G$ in a host graph that is the strong product of a path~$P$ with a graph~$H$ that satisfies some properties, such as having small treewidth, pathwidth or tree depth.   We show that this is NP-hard, even under numerous restrictions on both $G$~and~$H$.   In particular, computing the row pathwidth and the row treedepth is NP-hard even for a tree of small pathwidth, while computing the row treewidth is NP-hard even for series-parallel graphs.
\end{abstract}

\section{Introduction}

Layered treewidth, layered pathwidth, 
and row treewidth 
are structural parameters of graphs that have played a central role in recent developments in graph product structure theory.
(The original graph product structure theorem was proved by Dujmov\'ic et al.~\cite{DJMMUW20}; see also \cite{UWY22,BMO22,DHJLW21}
for improvements and related results.)
%
Testing whether a graph has layered pathwidth $\le 1$ is NP-complete~\cite{BanDevDuj-Algo-19}. In this work we ask analogous questions about the computational complexity of the \emph{row treewidth} of a graph $G$, the minimum possible treewidth of a graph~$H$ such that $G$ is a subgraph of the strong product $H{\boxtimes} P_\infty$ where $P_\infty$ is a 1-way infinite path:

\begin{itemize}
\item Is it NP-hard to compute the row treewidth? 
\item Is it NP-hard to test whether a planar graph has row treewidth 1, its smallest nontrivial value?
\item How complicated must $G$ be for these problems to be hard? Are they easier for planar graphs? 
\end{itemize}

Row treewidth can be naturally generalized to other product forms for $H$.  For example, the \emph{row pathwidth} of a graph $G$ is the smallest possible pathwidth of a graph~$H$ such that $G$ is a subgraph of $H{\boxtimes} P_\infty$, and similarly one can define the \emph{row treedepth} or \emph{row simple treewidth} or \emph{row simple pathwidth}.   The above questions could be asked for any of these parameters.

These questions have a geometric flavor coming from the grid-like graph products they concern. They are special cases of subgraph isomorphism, which is hard even under strong restrictions on both $G$ and the host graph~\cite{MT92}.
Our answers are that these problems are indeed hard, even for very simple graphs. It is NP-hard to test whether a tree of bounded pathwidth has row pathwidth one, %
and the same holds for row simple pathwidth and row treedepth.
Row treewidth is trivial for trees, but it is NP-hard to test whether series-parallel graphs of bounded degree and bounded pathwidth have row treewidth one. Under the small set expansion conjecture (a strengthening of the unique games conjecture from computational complexity theory), row treewidth, row pathwidth, layered treewidth, and layered pathwidth are hard to approximate with constant approximation ratio. We provide a few positive results as well:   Testing embeddability in $P{\boxtimes} P$ (a grid with diagonals) is polynomial for caterpillars, and testing embeddability in $P\Box P$ (a grid) is polynomial for planar graphs of bounded treewidth and bounded face size.


\subsection{Definitions}

A \emph{tree decomposition} of a graph $G$ is a tree $T$ whose vertices are labeled with subsets of vertices of $G$, called \emph{bags}. Each vertex must belong to bags forming a connected subtree of $T$, and each edge of $G$ must have both endpoints included together in at least one bag. If $T$ is a path, it forms a \emph{path decomposition}. The \emph{width} of the decomposition is the size of the largest bag, minus one. The \emph{treewidth} of $G$ is the smallest width of a tree decomposition of $G$, and the \emph{pathwidth} is the smallest width of a path decomposition. 
A tree decomposition is \emph{$w$-simple} if each set of $w$ vertices belongs to at most two bags. The \emph{simple pathwidth [simple treewidth]} of $G$ is the smallest $w$ such that $G$ has a $w$-simple path [tree] decomposition of width~$\le w$.    The \emph{treedepth} of a graph $G$ is the smallest height of a rooted tree $T$ on the vertices of $G$ such that every edge of $G$ connects an ancestor-descendant pair in~$T$.

For connected graphs with at least one edge, these width parameters have minimum value one. The graphs with treewidth one are trees. The graphs with treewidth two are the \emph{series-parallel graphs} and their subgraphs. The graphs with pathwidth one are not just paths, but \emph{caterpillars}: trees whose non-leaf vertices form a path called the \emph{spine}.  (The leaves of a caterpillar are called \emph{legs}.) The graphs with simple pathwidth one are paths. The graphs with treedepth one are \emph{stars}, graphs $K_{1,\ell}$ for integer $\ell$.
To avoid having to specify a specific number of vertices, it is convenient to let  $P_\infty=\langle p_0,p_1,\dots\rangle$ 
denote a \emph{ray}, a one-way infinite path, to let $C_\infty$ denote the caterpillar with infinite-length spine and infinitely many legs at each spine-vertex, and to let $S_\infty$ be a star with infinitely many degree-1 vertices.

The \emph{strong product} of two graphs $G{\boxtimes} H$ has a vertex $(u_i,v_j)$ for each pair of a vertex $u_i$ in $G$ and a vertex $v_j$ in $H$, and an edge connecting two pairs $(u_i,v_j)$ and $(u_{i'},v_{j'})$ when $u_i$ and $u_{i'}$ are either adjacent in $G$ or identical, and $v_i$ and $v_{i'}$ are either adjacent in $H$ or identical. For instance, the strong product of two paths is a \emph{king's graph}, the graph of moves of a chess king on a chessboard whose rows and columns are indexed by the vertices of the paths (see also \cref{fig:products}).

A \emph{layering} of a graph $G$ is a partition of the vertices into sets $L_0,L_1,\dots$ such that for any edge the endpoints are in the same or in consecutive sets. It can be understood as a representation of a graph $G$ as a subgraph of $P{\boxtimes} K$ for a path $P$ and a complete graph $K$; the layers of the layering are the subsets of vertices in this product coming from the same vertex of the path. Layered tree decompositions and path decompositions of a graph consist of a tree or path decomposition of the graph, together with a 
layering. Their width is the size of the largest intersection of a bag with a layer, minus one. The layered treewidth~\cite{DujMorWoo-JCTB,Sha-EuroCG-13} or layered pathwidth~\cite{BanDevDuj-Algo-19} of $G$ is the minimum width of such a decomposition. Instead, the \emph{row treewidth} or \emph{row pathwidth} of $G$ is the minimum treewidth or pathwidth of a graph $H$ for which $G$ is a subgraph of $P{\boxtimes} H$ for some path $P$. Intuitively, row treewidth and row pathwidth restrict the notion of layered treewidth and layered pathwidth by requiring each layer to have the same decomposition. These concepts are not equivalent: the layered treewidth of any graph $G$ is at most its row treewidth plus one, but there exist graphs with layered treewidth one and arbitrarily large row treewidth. A similar separation occurs also between layered pathwidth and row pathwidth~\cite{BosDujJav-DMTCS-22}.   We can similarly define layered simple treewidth/simple pathwidth/treedepth and row simple treewidth/simple pathwidth/treedepth; to our knowledge these parameters have not been studied previously.

\begin{figure*}[ht]
\hspace*{\fill}
\includegraphics[scale=0.7,page=4]{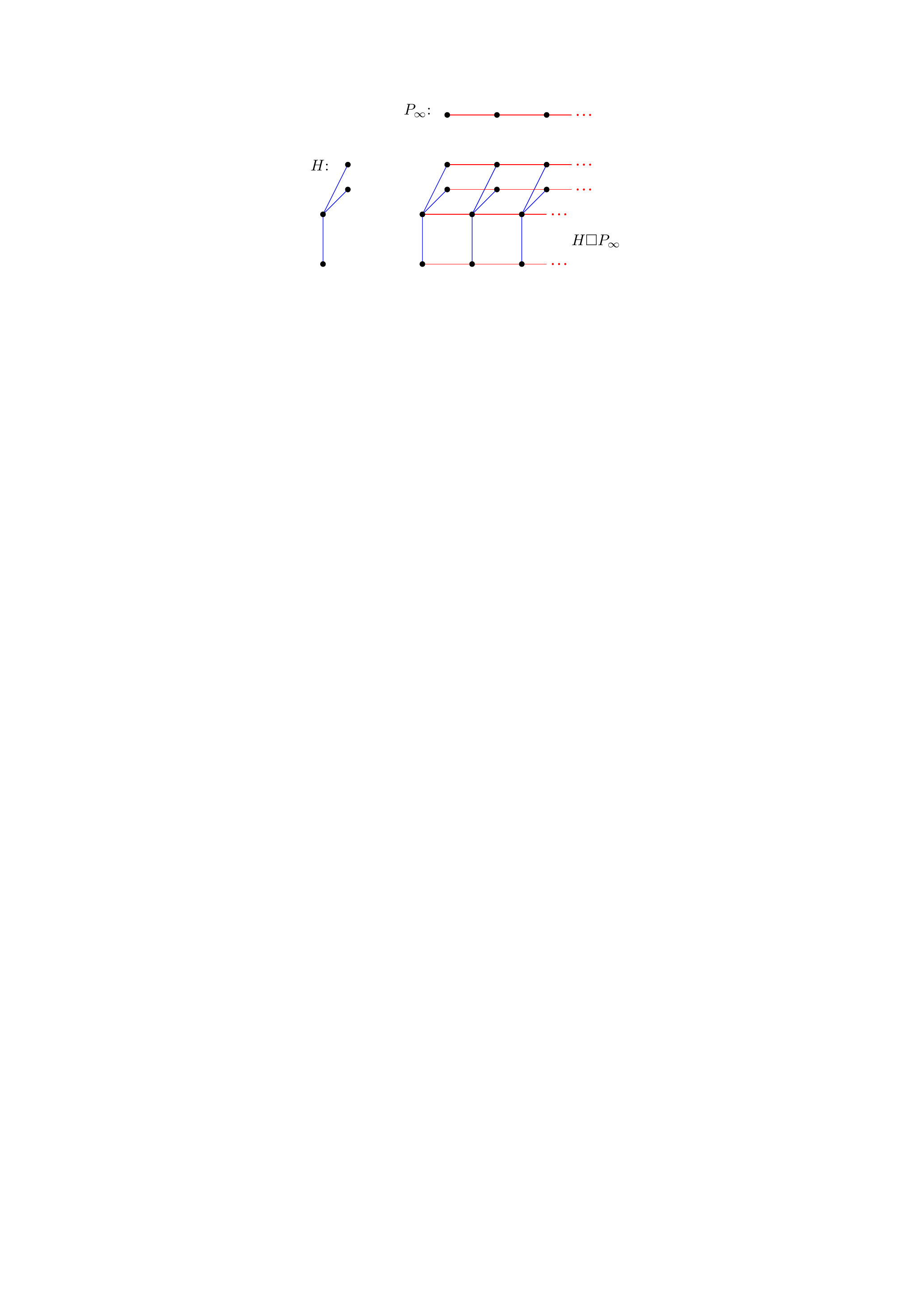}
\hspace*{\fill}
\includegraphics[scale=0.7,page=11]{products.pdf}
\hspace*{\fill}
\includegraphics[scale=0.7,page=12]{products.pdf}
\hspace*{\fill}
\caption{The graphs $P_\infty{\boxtimes} P_\infty$, $C_\infty {\boxtimes} P_\infty$, and $T{\boxtimes} P_\infty$ for a tree $T$.}
\label{fig:products}
\end{figure*}

We show that the following problems are NP-hard:
\begin{itemize}
\item {\sc RowSimplePathwidth}: Given a graph $G$ and an integer $k$, does $G$ have row simple pathwidth at most $k$?   We will show that this
	is NP-hard even for $k=1$, where it becomes the question whether $G$ is a subgraph of $P_\infty {\boxtimes} P_\infty$, i.e., the king's graph,
	which is why we
	also call this problem {\sc KingGraphEmbedding}.     See \cref{sec:grid} and \cref{sec:grid-details}.
\item {\sc RowPathwidth}: Given a graph $G$ and an integer $k$, does $G$ have row pathwidth at most $k$?   We will show that this
	is NP-hard even for $k=1$, where it becomes the question whether $G$ is a subgraph of $C_\infty {\boxtimes} P_\infty$. 
	See \cref{sec:rowPW}.
\item {\sc RowTreewidth}: Given a graph $G$ and an integer $k$, does $G$ have row treewidth at most $k$? We will show that this
	is NP-hard even for $k=1$, where it becomes the question whether $G$ is a subgraph of $T {\boxtimes} P_\infty$ for some tree $T$.
	See \cref{sec:rowTW} and \cref{sec:rowTW-details}.
\item {\sc RowTreedepth}: Given a graph $G$ and an integer $k$, does $G$ have row treedepth at most $k$?   We will show that this
	is NP-hard even for $k=1$, where it becomes the question whether $G$ is a subgraph of $S_\infty {\boxtimes} P_\infty$. 
	See \cref{sec:rowTD}.
\end{itemize}

It is helpful to introduce some notation for the strong product $H{\boxtimes} P_\infty$.
Recall that $P_\infty$ is a ray $\langle p_0,p_1,\dots\rangle$.
For any vertex $v\in H$, the \emph{$P$-extension} is the set of vertices $\langle v\times p_0,v\times p_1,\dots\rangle$.
For any vertex $v\times p_i \in H\times P_\infty$, the \emph{$H$-projection} is the vertex $v$ and the \emph{$P$-projection}
is the vertex $p_i$.   These concepts naturally expand to edges and paths.
Inspired by the case $H=P_\infty$ (where $H{\boxtimes}P_\infty$ is the king's graph) we define the following \emph{edge-orientations}: 
An edge $vw$ of $H{\boxtimes} P_\infty$ is \emph{horizontal} if $v,w$ have the same $H$-projection, \emph{vertical} if  
they have the same $P$-projection, and \emph{diagonal} otherwise.   Every vertex has only two incident horizontal edges.

We will occasionally also study the \emph{Cartesion product} $H\Box P_\infty$ of two graphs, which is the same as
the strong product except diagonals are omitted.   In particular, $P_\infty \Box P_\infty$ is the rectangular grid.

\section{Grid embeddings}
\label{sec:grid}

In this section we study {\sc KingGraphEmbedding}.   This problem is closely related to {\sc GridEmbedding}, the question whether a given graph $G$ is a subgraph of 
$P_\infty \Box P_\infty$.
{\sc GridEmbedding} is old and well-studied since at least the 1980s due to its connections to VLSI design.   Bhatt and Cosmadakis showed
in 1987 \cite{BC87} that {\sc GridEmbedding} is NP-hard 
even for trees of pathwidth 3 (the pathwidth was not studied explicitly by the authors, but can be verified from the construction).  Gregori \cite{Gregori89} expands their proof to binary trees.  Both proofs use a technique later called the ``logic engine'' by Eades and Whitesides \cite{EW96}.
Recently, Gupta et al.~\cite{GSZ21} strengthened the result to trees of pathwidth~2.


\begin{theorem}[Gupta et al.~\cite{GSZ21}]
\label{thm:gridEmbeddingNPhard}
    {\sc GridEmbedding} is NP-hard even for a tree of pathwidth 2.
\end{theorem}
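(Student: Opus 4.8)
The plan is to prove that {\sc GridEmbedding} is NP-hard by a reduction from a known NP-hard problem, using the ``logic engine'' technique referenced in the excerpt (due to Eades and Whitesides), while keeping careful control of the structure of the tree $G$ so that it has pathwidth~$2$. A natural source problem is {\sc NotAllEqual-3SAT} or an orientation/rigidity problem of the kind the logic engine was designed to encode; the logic engine realizes a boolean formula as a mechanical-looking tree whose only degrees of freedom, when embedded in the grid, correspond to truth assignments of the variables, so that a valid compact grid embedding exists if and only if the formula is satisfiable.

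First I would recall the logic engine gadget: a central horizontal ``shaft'' (a path in the grid) carries one ``armature'' per variable, each armature being a rigid sub-tree that, to fit in the available grid area, must hang either entirely above or entirely below the shaft --- this binary choice encodes the truth value of the variable. Along each armature I would attach ``flags'' at positions indexed by the clauses; a flag that would collide with the shaft (or with another flag) forces a local obstruction, and the clause/literal incidence structure is arranged so that each clause forbids exactly the one combination of flag positions corresponding to that clause being falsified. Second, I would argue that the whole construction is a \emph{tree}: the shaft, the armatures hanging off it, and the flags hanging off the armatures form a tree with no cycles, and I would verify that an embedding into $P_\infty \Box P_\infty$ of the required bounding-box dimensions exists precisely when every armature can choose an up/down orientation avoiding all collisions, i.e.\ precisely when the formula is satisfiable.

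Third, and this is the crucial point for the theorem as stated, I would bound the pathwidth of the constructed tree by~$2$. This requires designing the gadget so that it is ``path-like'': the spine-like shaft together with short pendant structures should admit a path decomposition of width~$2$ (equivalently, the tree should be close to a caterpillar, with only bounded-depth branching off a central path). I would exhibit an explicit path decomposition sweeping along the shaft, carrying at each step the current shaft vertex, its neighbour, and one vertex of whatever pendant armature/flag is currently being processed, checking that no bag needs more than three vertices. The main obstacle --- and the reason this strengthens the earlier pathwidth-$3$ results of Bhatt--Cosmadakis and Gregori --- is exactly this simultaneous constraint: the gadget must be rigid enough that its only embedding freedom is the variable orientations (so correctness holds) yet structurally simple enough to have pathwidth~$2$. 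Balancing rigidity against structural simplicity, and proving both the ``only if'' direction (every valid embedding induces a satisfying assignment, with no unexpected extra embeddings that break the reduction) and the pathwidth bound, is where the real work lies. Since this is a cited result of Gupta et al.~\cite{GSZ21}, I would ultimately defer the detailed verification to their construction, but the proof structure above is the approach I would reconstruct.
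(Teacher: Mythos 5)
This theorem is a citation to Gupta et al.~\cite{GSZ21}; the paper itself gives no proof of it, offering only an illustration of the cited tree (\cref{fig:pathwidth-2-reduction}) and a closely analogous logic-engine construction of its own (\cref{cl:constrainedEmbedding}: frame, per-variable armatures attached to a middle path, flags attached exactly where a literal does \emph{not} occur in a clause, reduction from NAE-3SAT). Your sketch reconstructs essentially that same approach---including the pathwidth bound via a central path whose removal leaves caterpillar-like pendants---and deferring the gadget-level verification to the cited construction is exactly what the paper does.
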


The reductions in~\cite{GSZ21,BC87} 
can be modified to work for {\sc KingGraphEmbedding}.
Even easier is to use the following general-purpose transformation.

\begin{figure}[ht]
    \centering
    \includegraphics{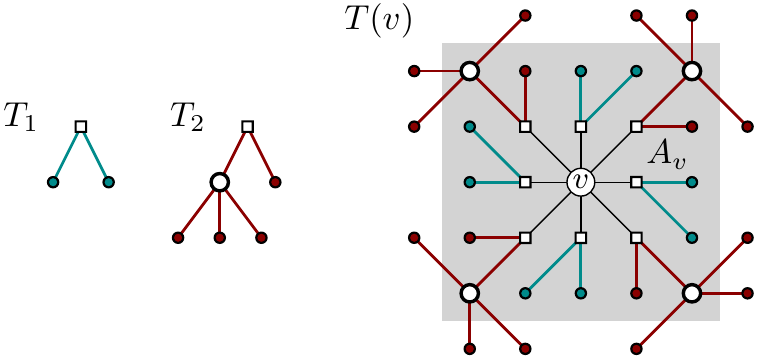}
    \caption{The trees $T_1$, $T_2$, and $T(v)$ in \cref{obs:BoxVsBoxtimes}.}
    \label{fig:grid-graph-reduction}
\end{figure}

Define $T_1$ and $T_2$ to be the trees shown in \cref{fig:grid-graph-reduction}, formed by subdividing one edge of $K_{1,1}$ and $K_{1,4}$ respectively. For a given vertex $v$, define $T(v)$ be a tree rooted at~$v$ with eight children: four copies each of $T_1$ and $T_2$, connected to $v$ at their degree-$2$ vertices.
The following is not hard to verify (see \cref{sec:grid-details}):

\begin{observation}
\label{obs:BoxVsBoxtimes}
    Let $G$ be a simple graph. Form $G'$ by replacing each vertex $v$ in $G$ by a new tree $T(v)$, and connecting a degree-$4$ vertex in $T(u)$ with a degree-$4$ vertex in $T(v)$ for each edge $uv$ in $G$.
    Then $G \subset P_\infty\Box P_\infty$ if and only if $G' \subset P_\infty{\boxtimes} P_\infty$.
\end{observation}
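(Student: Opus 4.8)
The plan is to exploit the degree-eight hub $v$ of each gadget $T(v)$: in any embedding into $P_\infty{\boxtimes}P_\infty$ it forces a rigid local picture, and the asymmetry between $T_1$ and $T_2$ pins the \emph{edge-carrying} subtrees (the $T_2$'s, whose centers are the degree-$4$ vertices that receive the inter-gadget edges) into the four \emph{diagonal} king-directions, so that every connection edge $c\sim c'$ leaves a hub diagonally. Since a $45^\circ$ rotation turns diagonal king-edges into orthogonal grid-edges, king-embeddings of $G'$ will correspond exactly to grid-embeddings of $G$. Throughout, let $\pi$ denote the given embedding and, for a hub $v$, let \emph{ring $i$} be the set of cells at king (Chebyshev) distance exactly $i$ from $\pi(v)$; note $\deg_G(v)\le 4$ since otherwise $G\not\subset P_\infty\Box P_\infty$ and $T(v)$ has too few degree-$4$ vertices.

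\textbf{Forward direction ($G\subset P_\infty\Box P_\infty \Rightarrow G'\subset P_\infty{\boxtimes}P_\infty$).} I would take a grid embedding of $G$, rotate it by $45^\circ$ and scale by a large constant $s$, and place each $T(v)$ in a fixed canonical pattern around $s\cdot$(rotated hub position): the eight $m$-vertices fill the eight cells of ring $1$; each connected $T_2$ points along one of the four diagonal directions with its center at the ring-$2$ corner cell two steps out, its three leaves and its partner $c'$ sitting in ring $3$; the four $T_1$'s and any unconnected $T_2$'s fill the four orthogonal directions and the unused diagonals. Each grid edge $uv$, now diagonal, is realized by making the two facing corner-centers adjacent. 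Verifying that for large $s$ the gadget blobs (which occupy rings $1$–$3$) do not collide is routine.

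\textbf{Backward direction ($G'\subset P_\infty{\boxtimes}P_\infty \Rightarrow G\subset P_\infty\Box P_\infty$).} This is the crux, and I would argue by a ring-counting rigidity. Because $v$ has degree $8$, its eight neighbours occupy all eight cells of ring $1$. Every $m$-vertex has exactly two gadget-neighbours besides $v$ ($a,b$ for $T_1$; $\ell,c$ for $T_2$), and these lie in ring $2$ (they are adjacent to a ring-$1$ cell but cannot reuse ring $0$ or the full ring $1$); distinct subtrees use distinct cells, giving at least $8\cdot 2=16$ used ring-$2$ cells. Since ring $2$ has exactly $16$ cells, the usage is \emph{tight}: each subtree uses exactly its two $m$-adjacent cells, so in a $T_2$ the three leaves of the center are pushed out of ring $2$ and, ring $1$ being full, into ring $3$. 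A \emph{connected} center additionally has its partner $c'$ in ring $3$, hence needs at least four ring-$3$ neighbours. A short case check of the three types of ring-$2$ cell shows that only the four corner cells have four or more ring-$3$ neighbours (they have five), whereas edge-midpoint and near-corner cells have exactly three. Therefore each connected $T_2$ center occupies a corner, distinct connected centers occupy distinct corners, and the $\le 4$ edges at $v$ leave along four distinct diagonal directions. Reading off these directions and rescaling the rotated hub positions then produces a grid embedding of $G$.

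\textbf{Main obstacle.} The conceptual heart is the ring-$2$ tightness together with the corner-forcing count, which I expect to be clean. The fiddly part lies in finishing the backward direction: a single adjacency $c\sim c'$ leaves slack in the exact displacement $\pi(u)-\pi(v)$, so one must argue that the direction-consistent, monotone diagonal drawing obtained from the forced structure can be snapped to an exact integer grid subgraph embedding (injective, with every edge realized by an orthogonal adjacency). Choosing $s$ large in the forward construction and straightening the monotone diagonal connections in the backward construction should handle this, but it is the step requiring the most care.
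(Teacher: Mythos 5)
Your forward direction and the first two counting steps of your backward direction coincide with the paper's proof: the ring count $1+8+16=25$ (the paper phrases this as the distance-$2$ ball $S_v$ exactly filling a $5\times 5$ square $A_v$) and the corner-forcing count (a connected center has four neighbours that must land in ring $3$, and only the four corner cells of ring $2$ have that many ring-$3$ neighbours) are precisely the paper's arguments. The genuine gap is the step you defer to ``snapping,'' and it cannot be repaired from the structure you have derived. The paper needs a third counting argument exactly here, and without it the displacement slack you concede is fatal: the constraints you have established --- connected centers on corners, distinct corners per hub, pairwise disjoint $5\times5$ squares, and king-adjacency $c\sim c'$ --- are satisfiable by graphs that are \emph{not} grid subgraphs, so no argument using only this data can conclude $G\subset P_\infty\Box P_\infty$. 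Concretely, take $G=C_7$ with hub positions $(0,0)$, $(3,5)$, $(8,8)$, $(11,3)$, $(15,-2)$, $(10,-7)$, $(5,-4)$ in cyclic order: consecutive displacements sum to zero, all seven $5\times5$ squares are pairwise disjoint (pairwise $\ell_\infty$-distance at least $5$), every consecutive pair of squares has a pair of king-adjacent corners, and the two corners used at each hub are distinct --- yet $C_7$ is an odd cycle and the grid is bipartite, so $C_7\not\subset P_\infty\Box P_\infty$. So ``reading off the directions'' and straightening a monotone diagonal drawing is not a fiddly finishing touch; it is impossible in general.

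The missing idea (the paper's final step) is to use the six ring-$3$ leaves across \emph{both} gadgets simultaneously: the three leaves of $c$ and the three leaves of $c'$ are six distinct vertices lying outside $S_u\cup S_v$, each adjacent to $c$ or $c'$, so there must exist six distinct cells outside $A_u\cup A_v$ adjacent to $c$ or $c'$. A short case check over the possible corner-to-corner adjacencies shows this can happen only when $cc'$ is a diagonal edge whose slope matches both corners, i.e., $A_u$ and $A_v$ touch corner-to-corner: for instance, with hub displacement $(5,4)$ only four such cells exist, and with displacement $(1,5)$ the vertex $c$ has only two cells available for its three leaves. This pins $\pi(v)-\pi(u)$ to exactly $(\pm5,\pm5)$ for every edge of $G$, after which dividing by $5$ and rotating by $45^\circ$ yields the grid embedding immediately, with no snapping required. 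Note that your own ring analysis already placed these six leaves in ring $3$; what your proposal never does is let the partner gadget's square constrain them.
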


\begin{figure}
    \centering
    \includegraphics{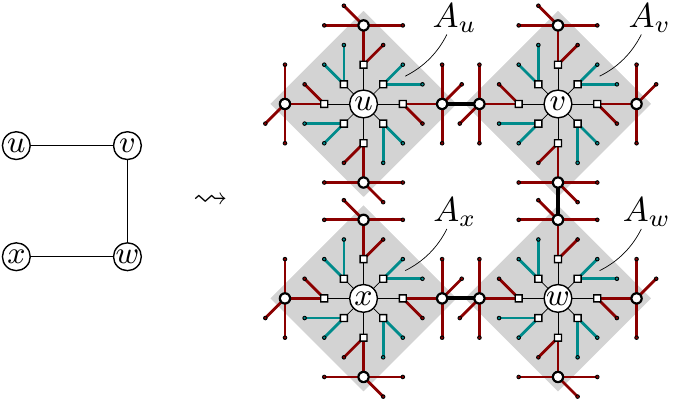}
    \caption{If $G \subset P_\infty \Box P_\infty$, then $G' \subset P_\infty {\boxtimes} P_\infty$.   (We show a 45$^\circ$ rotation of $P_\infty{\boxtimes} P_\infty$.)}
    \label{fig:grid-graph}
\end{figure}

The transformation clearly maintains a tree.
Replacing each vertex $v$ by a tree $T(v)$ of radius~$3$ increases the pathwidth by at most $3$, so applying the transformation to the tree of Gupta et al.~\cite{GSZ21} gives the following.
%


\begin{corollary}
\label{cor:gridEmbedding}
{\sc KingGraphEmbedding} is NP-hard, even for a tree of pathwidth at most $5$.
\end{corollary}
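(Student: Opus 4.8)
The plan is to reduce from {\sc GridEmbedding}, which by \cref{thm:gridEmbeddingNPhard} is already NP-hard for a tree $G$ of pathwidth~$2$. Given such a $G$, I would apply the transformation of \cref{obs:BoxVsBoxtimes} to build $G'$; since each $T(v)$ has constant size, $|G'|=O(|G|)$ and the reduction is computable in polynomial time. \cref{obs:BoxVsBoxtimes} states exactly that $G\subset P_\infty\Box P_\infty$ iff $G'\subset P_\infty{\boxtimes}P_\infty$, and the latter is by definition the question answered by {\sc KingGraphEmbedding}. Hence a polynomial-time algorithm for {\sc KingGraphEmbedding} on $G'$ would decide {\sc GridEmbedding} on $G$, giving NP-hardness. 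It remains to check that $G'$ is a tree of pathwidth at most~$5$. That $G'$ is a tree is immediate: contracting each $T(v)$ to its root recovers $G$, and each $T(v)$ is itself a tree joined to the rest only through the cross-edges corresponding to the (tree) edges of $G$, so $G'$ is connected and acyclic.

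For the pathwidth bound I would start from a path decomposition of $G$ of width~$2$, made nice so that consecutive bags differ by a single introduce/forget, and expand it. For each vertex $v$ I would use that $T(v)$ is rooted at $v$ with radius~$3$, so it admits an \emph{anchored} path decomposition in which $v$ lies in every bag and each of the eight pendant subtrees (the copies of $T_1$ and $T_2$) is processed in its own contiguous block of bags of constant width. I would then splice these anchored decompositions into the decomposition of $G$: in every expanded bag I keep the $\le 3$ roots of the corresponding $G$-bag, and in addition bring in, transiently and one pendant subtree at a time, the few internal vertices needed to cover that subtree's edges.

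The edge of $G'$ realizing a $G$-edge $uv$ attaches at a degree-$4$ center of a $T_2$-copy in each of $T(u)$ and $T(v)$; I would cover it inside the overlap of the intervals of $u$ and $v$, at the moment the relevant $T_2$-copy of one side is being processed, by additionally inserting its single partner center from the other side. Since each center serves at most one $G$-edge (the degree-$4$ vertices are consumed one per incident edge, and $\deg_G\le 4$ because $G$ embeds in the grid), this partner occupies one contiguous run of bags, preserving the path-decomposition property. At the peak, a bag holds at most the $3$ roots plus at most $3$ transient vertices (two from the subtree currently being processed and one cross-edge partner), so the width is at most~$5$; that is, the substitution raises the pathwidth of the width-$2$ tree by at most the radius~$3$.

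The main obstacle is precisely this pathwidth estimate: the bag-size counting is routine, but one must schedule the transient vertices so that (i) every internal edge and every cross-edge of $G'$ is covered, and (ii) each vertex of $G'$, in particular each cross-edge partner center, occupies a contiguous run of bags. The degree bound $\deg_G\le 4$ and the fact that distinct pendant subtrees of a single $T(v)$ are processed in disjoint blocks are exactly what make this scheduling possible without ever exceeding three transient vertices at once.
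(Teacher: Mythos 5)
Your proposal is correct and follows essentially the same route as the paper: reduce from {\sc GridEmbedding} on the pathwidth-$2$ trees of Gupta et al.\ via the transformation of \cref{obs:BoxVsBoxtimes}, note that the result is a tree, and argue that replacing each vertex by the radius-$3$ tree $T(v)$ raises the pathwidth by at most $3$. The only difference is one of detail: the paper simply asserts this pathwidth increase, whereas you carry out the bag-scheduling argument (anchored decompositions of the pendant subtrees, covering each cross-edge in the interval overlap of its endpoints) that justifies it.
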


In fact, one can easily adapt the reduction of Gupta et al.~\cite{GSZ21} to show NP-hardness of {\sc KingGraphEmbedding} even for a tree of pathwidth $2$; see \cref{fig:pathwidth-2-reduction} in the appendix for an illustration.

\subsection{Caterpillars}

On the other hand, for pathwidth 1 (i.e., caterpillars), we can solve {\sc KingGraphEmbedding} 
in linear time.

\begin{theorem}\label{thm:caterpillar}
    For any caterpillar $G$ the following are equivalent.
    \begin{enumerate}[label = (\arabic*), noitemsep, topsep = 0pt]
        \item $G \subset P_\infty{\boxtimes} P_\infty$
	\label{enum:caterpillar-in-grid}
        
        \item $G$ can be embedded in $P_\infty {\boxtimes} P_\infty$ such that all spine edges are diagonal
\label{enum:caterpillar-diagonal-spine}

        \item For every subpath $Q$ of the spine of $G$ we have $\sum_{v \in V(Q)} \deg(v) \leq 6|V(Q)|+2$.\label{enum:caterpillar-degree-condition}
    \end{enumerate}
\end{theorem}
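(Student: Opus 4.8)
The plan is to prove the cyclic chain of implications $(2)\Rightarrow(1)\Rightarrow(3)\Rightarrow(2)$. The implication $(2)\Rightarrow(1)$ is immediate, since $(2)$ merely describes a special kind of the embedding asserted in $(1)$.

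For $(1)\Rightarrow(3)$, I would argue by a neighbor-counting estimate. Fix an embedding and let $c_1,\dots,c_n$ be the cells hosting a spine subpath $Q$; these form a king-path, so consecutive cells are king-adjacent. The key geometric fact is that any two king-adjacent cells share at least four cells in their closed neighborhoods (exactly four for a diagonal step, six for a horizontal or vertical one). Building $N[\{c_1,\dots,c_n\}]$ one cell at a time along the path, each new cell contributes at most $9-4=5$ previously unseen cells, so $|N[\{c_1,\dots,c_n\}]|\le 9+5(n-1)=5n+4$, and the set $B$ of cells king-adjacent to $Q$ but not on it has size at most $4n+4$. Now each leg of a vertex of $Q$, as well as each of the at most two spine edges leaving $Q$, occupies a distinct cell of $B$; writing $c\le 2$ for the number of the latter and $\sum\ell_i$ for the number of legs, this gives $\sum\ell_i+c\le |B|\le 4n+4$. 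Since $\sum_{v\in V(Q)}\deg(v)=2(n-1)+c+\sum\ell_i$ (the $n-1$ internal spine edges are counted twice), we conclude $\sum_{v\in V(Q)}\deg(v)\le 2(n-1)+(4n+4)=6n+2$, as required.

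For $(3)\Rightarrow(2)$, I would place the spine on the straight diagonal, $c_i=(i,i)$, so that all spine edges are diagonal, and then show the legs can be assigned to free cells. Around an interior spine cell there are exactly six free cells: two \emph{private} cells adjacent to no other spine cell, and two cells each \emph{shared} with the left and the right neighbor; each of the two spine endpoints acquires three extra private cells. Assigning legs to cells is thus a bipartite assignment (equivalently, integral-flow feasibility) problem in which each $v_i$ has demand $\ell_i$, each private cell serves only its owner, and each two-cell shared pool is split between a consecutive pair. By the defect form of Hall's theorem this is feasible iff $\sum_{i\in A}\ell_i\le |N(A)|$ for every set $A$ of spine vertices, where $N(A)$ is the set of cells usable by $A$. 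Because the shared pools only connect consecutive vertices, the usable cells of a contiguous block are disjoint from those of a block separated by a gap, so it suffices to check the inequality for intervals. A short computation then shows that for an interval of $k$ vertices the right-hand side equals exactly $4k+2$ in the interior, and $4k+3$ or $4k+4$ when the interval meets one or both spine ends---precisely the quantities that condition $(3)$ supplies once the spine contribution ($2k$, $2k-1$, or $2k-2$) is subtracted from $6k+2$.

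I expect the main obstacle to be the sufficiency direction $(3)\Rightarrow(2)$: correctly cataloguing the private and shared cells around the diagonal (including the exceptional endpoint behavior) and verifying that the interval form of Hall's condition matches the degree condition exactly. The necessity direction $(1)\Rightarrow(3)$ is a clean counting argument once the ``every king-step shares at least four closed neighbors'' fact is isolated.
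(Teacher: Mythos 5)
Your proposal is correct, and every numerical claim in it checks out (king-adjacent cells share $4$ closed neighbours on a diagonal step and $6$ on an orthogonal one; an interior diagonal spine cell has exactly $2$ private and $2+2$ shared free cells while an endpoint gains $3$ extra private ones; the interval Hall bounds $4k+2$, $4k+3$, $4k+4$ match condition (3) exactly). It follows the same cyclic skeleton as the paper --- (2)$\Rightarrow$(1) trivial, (1)$\Rightarrow$(3) by counting, (3)$\Rightarrow$(2) by placing the spine on the main diagonal --- but both nontrivial implications are argued by genuinely different means. For (1)$\Rightarrow$(3) the paper bounds the degree sum by $8|V(Q)|-2|E(Q)|$, using that the endpoints of any edge of the king's graph have at least two common neighbours together with the triangle-freeness of caterpillars; you instead bound the union of closed neighbourhoods of the spine cells by $5|V(Q)|+4$ and pack the legs and the at most two outgoing spine edges into the at most $4|V(Q)|+4$ boundary cells. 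These counts are equivalent in strength. The more substantial difference is (3)$\Rightarrow$(2): the paper runs a greedy left-to-right placement of legs (always taking the free cell adjacent to $(i,i)$ minimizing $x+y$) and shows that a failure of the greedy produces a subpath $Q$ for which the fully occupied region $A\cup N(A)$ forces a violation of (3); you replace the greedy by a defect-Hall argument on the bipartite assignment of legs to free cells, noting that the private/shared-pool structure makes the Hall condition decompose over intervals of the spine, where it coincides term by term with condition (3). Your route yields a cleaner, non-algorithmic exact equivalence and avoids the paper's bookkeeping about ``free'' vertices; the paper's greedy has the advantage of constructing the embedding explicitly in one left-to-right pass, which is in the spirit of the linear-time corollary that follows (though that corollary only needs condition (3) plus {\sc MaximumSubarray}, so it is supported equally well by either proof).
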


\begin{proof}
    \ref{enum:caterpillar-in-grid}${\Longrightarrow}$ \ref{enum:caterpillar-degree-condition}: \quad 
    Assume that $G$ is a caterpillar that is a subgraph of $H = P_\infty {\boxtimes} P_\infty$.
    Let $Q$ be any fixed subpath of the spine of $G$.
    Clearly, for any vertex $v \in P_\infty {\boxtimes} P_\infty$ we have $|N_H(v)| \leq 8$ and for any edge $uv \in P_\infty {\boxtimes} P_\infty$ we have $|N_H(u) \cap N_H(v)| \geq 2$.
    As caterpillar $G$ contains no triangles, for any two adjacent vertices $x,y$ in $G$ we have $N_G(x) \cap N_G(y) = \emptyset$.
    Hence
    \[
        \sum_{v \in V(Q)} \deg(v) \leq 8|V(Q)| - 2|E(Q)| = 6|V(Q)| + 2.
    \]

    
    \ref{enum:caterpillar-degree-condition} ${\Longrightarrow}$ \ref{enum:caterpillar-diagonal-spine}: \quad
    Assume that $G = (V,E)$ is a caterpillar, say with spine $\langle v_1,\ldots,v_k\rangle$.
The vertices of $P_\infty {\boxtimes}P_\infty$ naturally corresponds $\mathbb{N} \times \mathbb{N}$, where $(p_i,p_j')$ 
is mapped to $(i,j)$.
    We embed the spine of $G$ along the main diagonal, i.e., place $v_i$ at $(i,i)$ for $i=1,\ldots,k$.  
    Then, we proceed along the spine from $v_1$ to $v_k$, always placing the next leg at $v_i$ at the positions $(x,y)$ adjacent to $(i,i)$ with $x + y$ as small as possible.
    Let us say that $v_i$ is \emph{free} if 
    two leaves at $v_i$ are embedded successfully at $(i-1,i)$ and $(i,i-1)$, respectively.
    In particular, the first vertex with degree at least $4$ is always free.   (We assume there exists such a vertex, otherwise $G$ clearly can be embedded.) 

Assume that this embedding procedure fails to find a suitable position for a leaf at $v_j$ for some $j \in [k]$.
    Let $i \leq j$ be the largest index such that $v_i$ is free, and $Q = \langle v_i,\ldots,v_j\rangle$ be the subpath of the spine of $G$ from $v_i$ to $v_j$.
    Observe that $\deg_G(v_i),\deg_G(v_j) \geq 4$ and further that for $A = \{(i,i), \ldots, (j,j)\}$, there is a vertex in $V(Q) \cup N_G(Q)$ on each of the $5(j-i)+9$ points in $A \cup N(A)$ in $P_\infty {\boxtimes} P_\infty$.
    With $|V(Q) \cup N_G(Q)| = \sum_{v \in V(Q)} \deg(v) - (j-i) + 1$, it follows that 
    \begin{align*}
        & \hspace{4.6em} |V(Q) \cup N_G(Q)| > |A \cup N(A)|\\
        \Leftrightarrow & \sum_{v \in V(Q)} \deg(v) - (j-i) + 1 > 5(j-i) + 9\\
        \Leftrightarrow& \sum_{v \in V(Q)} \deg(v) > 6(j-i)+8 = 6(j-i+1) + 2\\
        & \hspace{12.5em} = 6|V(Q)|+2,
    \end{align*}
    which implies that $G$ does not satisfy~\ref{enum:caterpillar-degree-condition}.
    
    \medskip

    \ref{enum:caterpillar-diagonal-spine} ${\Longrightarrow}$ \ref{enum:caterpillar-in-grid}: \quad 
    This is immediate.
\end{proof}

\begin{corollary}
    {\sc KingGraphEmbedding} can be solved in linear time for $n$-vertex caterpillars.
\end{corollary}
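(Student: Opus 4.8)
The plan is to reduce the embeddability test to checking condition~\ref{enum:caterpillar-degree-condition} of \cref{thm:caterpillar}, which by the equivalence \ref{enum:caterpillar-in-grid}${\Leftrightarrow}$\ref{enum:caterpillar-degree-condition} is both necessary and sufficient for $G \subset P_\infty{\boxtimes}P_\infty$. First I would identify the spine $\langle v_1,\dots,v_k\rangle$: the non-leaf vertices of a caterpillar induce a path, which can be extracted in $O(n)$ time (and at the same time one verifies that the input really is a caterpillar, by checking that deleting the leaves leaves a path). Computing every degree $\deg(v_t)$ also takes $O(n)$ time.

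The key observation is that condition~\ref{enum:caterpillar-degree-condition} can be rephrased as a maximum-subarray-sum problem. Writing $c_t = \deg(v_t) - 6$ and using $|V(Q)| = j-i+1$ for the subpath $Q = \langle v_i,\dots,v_j\rangle$, the inequality $\sum_{v\in V(Q)}\deg(v) \le 6|V(Q)|+2$ is equivalent to $\sum_{t=i}^{j} c_t \le 2$. Hence condition~\ref{enum:caterpillar-degree-condition} holds for \emph{all} spine subpaths if and only if the largest sum over all nonempty contiguous subsequences of $(c_1,\dots,c_k)$ is at most~$2$.

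This maximum subarray sum is computable in linear time by Kadane's algorithm: maintaining $M_j = \max(c_j,\, M_{j-1}+c_j)$ and returning $\max_j M_j$, then comparing the result against~$2$ to decide embeddability. Since each phase (spine extraction, degree computation, and the single left-to-right scan) runs in $O(n)$ time, the whole procedure is linear.

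The statement is an easy corollary of \cref{thm:caterpillar}, so there is no real obstacle beyond the reformulation; the only points requiring a little care are avoiding the naive $\Theta(k^2)$ enumeration of all spine subpaths, which the maximum-subarray formulation circumvents, and the degenerate cases (a bare path, or a caterpillar with no vertex of degree $\ge 4$), which embed trivially as already noted in the proof of \cref{thm:caterpillar} and are handled automatically since then every $c_t \le -2 < 2$.
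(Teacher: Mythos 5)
Your proposal is correct and follows essentially the same route as the paper: both rephrase condition~\ref{enum:caterpillar-degree-condition} of \cref{thm:caterpillar} as the statement that every contiguous subsequence of $(\deg(v_i)-6)_{i\in[k]}$ sums to at most $2$ (equivalently, no subsequence sums to $3$ or more) and then invoke the linear-time {\sc MaximumSubarray} (Kadane) algorithm. The extra details you supply (spine extraction, degree computation, degenerate cases) are routine and consistent with the paper's argument.
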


\begin{proof}
    Let $G$ be a caterpillar with spine $\langle v_1,\ldots,v_k\rangle$, $k \leq n$.
    Using \ref{enum:caterpillar-degree-condition} in \cref{thm:caterpillar}, $G$ admits \emph{no} embedding into $P_\infty {\boxtimes} P_\infty$ if and only if the sequence $(\deg(v_i) - 6)_{i \in [k]}$ has a contiguous subsequence whose sum is at least~$3$.
    Finding such a subsequence is the {\sc MaximumSubarray} problem and can be solved in time $O(k)$~\cite{Gri82}.
\end{proof}

\section{Row pathwidth}
\label{sec:rowPW}

Now we consider the row pathwidth, and show that testing whether the row pathwidth is 1 is NP-hard.
This is the same as asking whether a given graph $G$ is a subgraph of $C_\infty {\boxtimes} P_\infty$.
We also consider the related problem of embedding in $C_\infty \Box P_\infty$.    Both problems are easily
shown NP-hard using another observation concerning how graph transformations affect embeddability.

\begin{observation}
\label{obs:PVsC}
\label{obs:CvsP}
Let $G$ be a simple graph, and for $k\in \{4,6\}$ let $G_k'$ be the result of adding (at any original vertex $v$ of $G$) $\max\{0,k-\deg(v)\}$ leaves that are adjacent to $v$.     Then
\begin{itemize}
\item $G \subset P_\infty \Box P_\infty$ if and only if $G'_4 \subset C_\infty \Box P_\infty$.
\item $G \subset P_\infty {\boxtimes} P_\infty$ if and only if $G'_6 \subset C_\infty {\boxtimes} P_\infty$.
\end{itemize}
\end{observation}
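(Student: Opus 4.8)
The plan is to read $C_\infty{\boxtimes} P_\infty$ (resp. $C_\infty\Box P_\infty$) as a king's graph (resp. grid) living on the spine vertices, decorated by low-degree ``leg'' vertices, and to show that padding up to degree~$k$ forces every original vertex onto the spine. Write the spine of $C_\infty$ as $\langle s_0,s_1,\dots\rangle$ and let $\ell_{i,1},\ell_{i,2},\dots$ be the legs at $s_i$. I would first record two structural facts. First, the subgraph of $C_\infty{\boxtimes} P_\infty$ induced on the \emph{spine vertices} $\{(s_i,p_j)\}$ is exactly $P_\infty{\boxtimes} P_\infty$: two spine vertices are adjacent precisely when their $H$-projections coincide or are consecutive on the spine and their $P$-projections coincide or are consecutive, which is the king-graph adjacency; the analogous statement identifies the spine-induced subgraph of $C_\infty\Box P_\infty$ with the grid $P_\infty\Box P_\infty$. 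Second, every \emph{leg vertex} $(\ell_{i,m},p_j)$ has small degree: the closed $H$-neighborhood of $\ell_{i,m}$ in $C_\infty$ is just $\{\ell_{i,m},s_i\}$, so $(\ell_{i,m},p_j)$ has at most $2\cdot 3-1=5$ neighbors in $C_\infty{\boxtimes} P_\infty$ and at most $3$ in $C_\infty\Box P_\infty$ (fewer in the boundary column $p_0$). These two thresholds, $5$ and $3$, are exactly what dictate the padding values $k=6$ and $k=4$.

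For the forward direction I would take an embedding of $G$ into $P_\infty{\boxtimes} P_\infty$ (resp. the grid) and reuse it verbatim on the spine copy inside $C_\infty{\boxtimes} P_\infty$ (resp. $C_\infty\Box P_\infty$). It then only remains to place the added leaves. If $v$ is placed at $(s_i,p_j)$, each added leaf at $v$ can be sent to a vertex $(\ell_{i,m},p_j)$ in the same column: this is adjacent to $(s_i,p_j)$ in both products (the $H$-projections are adjacent and the $P$-projections coincide). Assigning a fresh leg index $m$ to every added leaf keeps the map injective, and since subgraph embeddings only require the edges of $G'_k$ to map to host edges, no further verification is needed. As each spine vertex carries infinitely many legs, there is always room.

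The backward direction is the crux, and it is where the padding earns its keep. Given an embedding of $G'_k$ into $C_\infty{\boxtimes} P_\infty$ (resp. $C_\infty\Box P_\infty$), each original vertex $v$ satisfies $\deg_{G'_k}(v)\ge k$. Since $k$ strictly exceeds the maximum leg-vertex degree ($5$ when $k=6$, and $3$ when $k=4$), no original vertex can be mapped to a leg vertex, so every original vertex lands on a spine vertex. Any edge of $G$ then joins two spine vertices and hence maps to an edge of the spine-induced subgraph, which by the first structural fact is $P_\infty{\boxtimes} P_\infty$ (resp. the grid). Restricting the given embedding to $V(G)$ therefore yields the required embedding of $G$.

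The only point requiring care is the degree bookkeeping: I must confirm the leg-degree upper bounds hold everywhere, including near the boundary column $p_0$ where degrees only drop (so the bounds persist), and check that the added-leaf placements collide neither with spine vertices nor with one another, which follows from indexing the leaves by globally distinct legs. I would present the two cases ($\boxtimes$/king and $\Box$/grid) in parallel, since the arguments are identical apart from the degree constants $5$ and $3$.
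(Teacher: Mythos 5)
Your proposal is correct and follows essentially the same route as the paper's proof: the forward direction places the added leaves on leg vertices in the same column as their parent, and the backward direction uses the degree bounds of $3$ (Cartesian) and $5$ (strong) on leg-row vertices to force every original vertex, having degree at least $k$, onto the spine rows, whose induced subgraph is the grid (respectively king's graph). Your write-up merely makes explicit a few details the paper leaves implicit (the spine-induced subgraph identification and the injectivity bookkeeping for the added leaves).
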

\begin{proof}
The forward direction is obvious: If $G$ is such a subgraph, then take the embedding of $G$ in the grid, and use the $P$-extensions of $k$ legs at each spine-vertex of $C_\infty$ to place the added leaves at each vertex $v$.

For the other direction, observe that 
all vertices on $P$-extensions of legs of $C_\infty$ have degree at most 3 in $C_\infty \Box P_\infty$, and degree at most 5 in $C_\infty {\boxtimes} P_\infty$.   We constructed $G'_k$ (for $k\in \{4,6\}$) such that the vertices of $G$ have degree $k$,
so they must be placed on the $P$-extension of a spine-vertex.   If we set $\pi$ to be the spine of $C_\infty$, therefore $G$ is embedded in $\pi \Box P_\infty$ (respectively $\pi {\boxtimes} P_\infty$) as desired.
\end{proof}

\begin{theorem}
\label{thm:row pathwidth}
It is NP-hard to test whether a tree is a subgraph of $C_\infty {\boxtimes} P_\infty$.
It is also NP-hard to test whether a tree is a subgraph of $C_\infty \Box P_\infty$.
Both results hold even for trees with constant maximum degree and pathwidth~3.
\end{theorem}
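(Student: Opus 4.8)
The plan is to obtain both statements as essentially immediate consequences of \cref{obs:PVsC}, which turns a grid (respectively king-graph) embedding question into the corresponding $C_\infty$ question by padding every low-degree vertex with leaves. Since attaching leaves to a graph keeps it a tree, the instances produced by this transformation are again trees, so I only need to feed \cref{obs:PVsC} with hard \emph{tree} instances and then track the maximum degree and the pathwidth of the output.

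For the Cartesian statement I would reduce from {\sc GridEmbedding}. By \cref{thm:gridEmbeddingNPhard} this problem is NP-hard already for a tree $G$ of pathwidth $2$, and because every vertex of $P_\infty\Box P_\infty$ has degree at most $4$, the trees produced by the logic-engine construction may be assumed to have constant (at most $4$) maximum degree. Applying the first part of \cref{obs:PVsC} yields $G'_4$, for which $G\subset P_\infty\Box P_\infty$ holds if and only if $G'_4\subset C_\infty\Box P_\infty$. Here each original vertex acquires degree $\max\{\deg_G(v),4\}\le 4$ while every added vertex is a leaf, so $G'_4$ still has constant maximum degree; and attaching leaves to a tree of pathwidth $\ge 1$ raises its pathwidth by at most one (next to a bag already containing the neighbour, insert a new two-element bag holding that neighbour and the leaf), so $G'_4$ has pathwidth at most $3$. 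This proves the Cartesian case.

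For the strong-product statement I would argue identically but start from {\sc KingGraphEmbedding}, which (as remarked after \cref{cor:gridEmbedding}) is NP-hard for a tree of pathwidth $2$, and whose instances likewise have bounded degree since $P_\infty{\boxtimes} P_\infty$ has maximum degree $8$. Invoking the second part of \cref{obs:PVsC} with $k=6$ gives a tree $G'_6$ of constant maximum degree and pathwidth at most $3$ with $G\subset P_\infty{\boxtimes} P_\infty$ if and only if $G'_6\subset C_\infty{\boxtimes} P_\infty$. The only point needing real care — and the main obstacle — is controlling the maximum degree and the pathwidth of the instances handed to \cref{obs:PVsC} simultaneously: the degree is forced by the host (grid vertices have degree $\le 4$, king-graph vertices degree $\le 8$), so the hard trees of \cref{thm:gridEmbeddingNPhard} and its king-graph variant can be taken of constant degree, and the padding only adds leaves while bringing each original vertex up to degree at most $4$ (respectively $6$); the pathwidth is then handled by the elementary leaf-insertion argument above, which accounts for the increase from $2$ to $3$. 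Everything else is immediate from \cref{obs:PVsC}.
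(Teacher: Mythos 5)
Your proposal is correct and is essentially the paper's own proof: it too invokes \cref{obs:PVsC} with $k=4$ (resp.\ $k=6$) on the pathwidth-$2$ hard tree instances of {\sc GridEmbedding} (\cref{thm:gridEmbeddingNPhard}) and of {\sc KingGraphEmbedding} (the remark after \cref{cor:gridEmbedding}), noting that the transformation preserves trees, keeps the maximum degree constant, and raises the pathwidth by at most one. One nitpick on your parenthetical justification: inserting a two-element bag $\{v,\ell\}$ between consecutive bags breaks contiguity for the other vertices of those bags; instead insert $X\cup\{\ell\}$ immediately after a bag $X$ containing $v$ --- the conclusion (pathwidth increases by at most one) is unaffected.
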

\begin{proof}
By the discussion after \cref{cor:gridEmbedding},
    testing whether $G\subset P_\infty {\boxtimes} P_\infty$ is NP-hard, even for a tree with pathwidth $2$.
    Convert $G$ into $G'$ using \cref{obs:PVsC} with $k=6$.
    This preserves a tree, increases the pathwidth by at most~1, and the maximum degree is 8.
    Also $G \subset P_\infty {\boxtimes} P_\infty$ if and only if $G' \subset C_\infty {\boxtimes} P_\infty$, which proves the first claim.   
    The second claim is similar, using 
Theorem~\ref{thm:gridEmbeddingNPhard}
and \cref{obs:PVsC} with $k=4$.
\end{proof}

\begin{corollary}
{\sc RowPathwidth}$(G)$ is NP-hard, even for trees of bounded degree and pathwidth, and even if we only want to know whether the row pathwidth is 1.
\end{corollary}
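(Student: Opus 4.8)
The plan is to derive this corollary directly from \cref{thm:row pathwidth} by making precise the connection between having row pathwidth~$1$ and being a subgraph of $C_\infty{\boxtimes}P_\infty$. First I would recall from the definitions that a graph has pathwidth~$\le 1$ exactly when it is a caterpillar, and that row pathwidth~$\le 1$ means there is a graph $H$ of pathwidth~$\le 1$ (that is, a caterpillar) together with a path $P$ such that $G\subset H{\boxtimes}P$.

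The key step is to establish the equivalence: a graph $G$ has row pathwidth~$\le 1$ if and only if $G\subset C_\infty{\boxtimes}P_\infty$. For the forward direction, any caterpillar $H$ embeds as a subgraph of the universal caterpillar $C_\infty$ (whose spine is infinite and which carries infinitely many legs at each spine-vertex), and any finite path $P$ embeds in the ray $P_\infty$; since the strong product is monotone under taking subgraphs, $G\subset H{\boxtimes}P\subset C_\infty{\boxtimes}P_\infty$. For the converse, if $G\subset C_\infty{\boxtimes}P_\infty$ then, because $G$ is finite, its image uses only finitely many vertices of $C_\infty$; restricting to the finite sub-caterpillar $H$ spanned by those vertices yields a host of pathwidth~$1$, so $G$ has row pathwidth~$\le 1$.

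With this equivalence in hand, the corollary is immediate. \cref{thm:row pathwidth} already shows that deciding whether a tree is a subgraph of $C_\infty{\boxtimes}P_\infty$ is NP-hard, even for trees of constant maximum degree and pathwidth~$3$; by the equivalence this is exactly the problem of deciding whether such a tree has row pathwidth~$\le 1$. Hence {\sc RowPathwidth} is NP-hard even when the target value is fixed to $k=1$ and the input is restricted to trees of bounded degree and bounded pathwidth.

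Since the heavy lifting (the logic-engine reduction underlying \cref{thm:row pathwidth}) is already done, there is no real obstacle here; the only point requiring care is the finiteness argument in the converse direction of the equivalence, which ensures that embeddability in the infinite host $C_\infty{\boxtimes}P_\infty$ genuinely certifies a pathwidth-$1$ host for the finite graph~$G$.
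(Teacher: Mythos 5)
Your proof is correct and follows essentially the same route as the paper: the corollary is an immediate consequence of \cref{thm:row pathwidth} together with the equivalence (stated in the paper's introduction and made explicit by you) that row pathwidth~$\le 1$ is the same as being a subgraph of $C_\infty{\boxtimes}P_\infty$. Your careful spelling-out of both directions of that equivalence, including the finiteness argument for the converse, is exactly the implicit justification the paper relies on.
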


\section{Row treewidth}
\label{sec:rowTW}

We now sketch why computing row treewidth NP-hard, even for testing whether it is 1, i.e., whether a given graph can be embedded in $T{\boxtimes} P_\infty$ for some tree $T$.   (The full proof is in the appendix.)

\begin{theorem}
\label{thm:row-treewidth}
It is NP-hard to test whether a graph $G$ is a subgraph of $T{\boxtimes} P_\infty$ for some tree $T$,  even for a series-parallel graph $G$.
\end{theorem}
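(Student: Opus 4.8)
The plan is to reduce from {\sc KingGraphEmbedding}, which \cref{cor:gridEmbedding} (and the subsequent pathwidth-2 strengthening) establishes is NP-hard even for trees. Given such a tree $G$, I want to build a series-parallel graph $G^\star$ so that $G^\star \subset T{\boxtimes} P_\infty$ for \emph{some} tree $T$ if and only if $G \subset P_\infty{\boxtimes} P_\infty$.

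The conceptual obstacle is that row treewidth $1$ allows $H$ to be \emph{any} tree $T$, not just a path, so a priori the embedding has much more freedom than a king's-graph embedding: a tree $T$ can branch, and the strong product $T{\boxtimes}P_\infty$ has "diagonal" edges going into several branches at once. The heart of the reduction must therefore be a gadget that forces the tree $T$ to locally behave like a path along the image of $G$, so that embedding in $T{\boxtimes}P_\infty$ collapses back to embedding in $P_\infty{\boxtimes}P_\infty$. Concretely, I would attach to each relevant vertex a high-degree or tightly-packed gadget (analogous in spirit to the degree-based "saturation" trick used in \cref{obs:PVsC} and the neighborhood-counting argument in \cref{thm:caterpillar}) so that a vertex $v$ of $G$ can only be placed on the $P$-extension of a \emph{degree-2} vertex of $T$. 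Along any maximal degree-$2$ path of $T$, the product $T{\boxtimes}P_\infty$ restricted to that path is exactly a king's graph, so the forced placements reduce to a grid-with-diagonals embedding.

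I would carry out the construction in the following order. First I fix the reduction gadget: replace (or augment) each vertex $v$ of $G$ by a small series-parallel widget whose local degree/neighborhood demands exceed what a branch vertex of $T$ can supply, forcing $v$ onto a degree-$2$ portion of $T$; I must check that each widget, and hence $G^\star$, remains series-parallel (treewidth $2$), which is the reason for preferring series-parallel gadgets over trees here. Second, the \emph{forward} direction: given a king's-graph embedding of $G$, I set $T$ to be a path (making $T{\boxtimes}P_\infty$ literally the king's graph) and place the widgets using the extra layers of $P_\infty$, exactly as in the forward direction of \cref{obs:PVsC}. Third, the \emph{backward} direction: assuming $G^\star \subset T{\boxtimes}P_\infty$ for some tree $T$, I argue via the degree/packing gadget that the images of all $G$-vertices lie on a single degree-$2$ path $\pi$ of $T$, so the embedding restricts to $\pi{\boxtimes}P_\infty = P_\infty{\boxtimes}P_\infty$, recovering a king's-graph embedding of $G$.

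The main obstacle I expect is the backward direction, specifically ruling out that the embedding exploits the branching of $T$. A simple counting bound ($\deg \le 8$ in the king's graph, or $\le$ some constant in $T{\boxtimes}P_\infty$) is not by itself enough, since a branch vertex of $T$ of degree $d$ gives product-neighborhoods that \emph{grow} with $d$; so I cannot merely cap degrees. Instead the gadget must enforce a \emph{local planarity / non-branching} constraint — intuitively, that the neighbors of $v$ must be laid out in a single consistent cyclic/linear order around $v$, which a genuine tree-branch in $T$ cannot accommodate. Making this forcing argument rigorous — showing that no clever use of multiple $T$-branches can simulate the widget while still embedding $G$ — is where the real work lies, and is presumably why the full proof is deferred to the appendix. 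The remaining steps (series-parallelness of $G^\star$, the forward embedding, and the final reduction wrap-up) I expect to be routine once the gadget and its forcing lemma are in place.
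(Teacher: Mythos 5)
Your proposal correctly isolates the crux --- ruling out that an embedding in $T\boxtimes P_\infty$ exploits the branching of $T$ --- but it never supplies the gadget or the forcing lemma that would accomplish this, and that missing piece is essentially the entire content of the theorem. Worse, the specific shape you propose for it (a local widget that forces each vertex of $G$ onto the $P$-extension of a degree-$2$ vertex of $T$) runs against the grain of the product structure: as you yourself note, a branch vertex of degree $d$ in $T$ yields product-degree $3d+2$, so every local ``demand'' (degree, packing, common neighborhoods) is \emph{easier} to satisfy at branch vertices, not harder; no local neighborhood-counting argument can penalize branching. The paper does not attempt such a local collapse at all. Its only local forcing gadgets are of a different kind: a $K_{2,5}$ glued onto an edge $uv$ forces $uv$ to be \emph{horizontal} (two adjacent vertices with at least five common neighbors must have the same $T$-projection), and ``arrow-heads'' (added triangles) force designated edges to be \emph{vertical}. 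The collapse of $T$ to a path is then obtained by a \emph{global} argument that depends on the structure of the constructed graph, not on a gadget: every vertex $v$ lies on a monotone walk from $t$ to $b$ (the two poles of the logic-engine frame) using exactly $H$ non-horizontal edges, and since the outer paths of the frame force the $T$-projections $t',b'$ to be at distance exactly $H$ in $T$, uniqueness of paths in trees pins the $T$-projection of every vertex onto the single path $\pi$ from $t'$ to $b'$.

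This is also why your choice of starting problem is problematic. You reduce from {\sc KingGraphEmbedding}, treating the hard tree $G$ as a black box, but the global walk-counting argument above needs the input to have the pole/frame structure; an arbitrary tree of pathwidth $2$ has nothing playing the role of $t$, $b$, or the outer paths, and no obvious substitute for the invariant that every $t$--$b$ walk uses exactly $H$ non-horizontal edges. Accordingly, the paper does not reduce from {\sc KingGraphEmbedding}: it reduces from NAE-3SAT, re-implementing the logic engine (frame, armatures, flags) as a series-parallel graph with prescribed edge orientations (\cref{cl:constrainedEmbedding}), and only then eliminates the orientation labels via the $K_{2,5}$ and arrow-head gadgets, a width-tripling step, and a divisibility argument in a designated spacer row. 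Your forward direction (take $T$ to be a path) and the series-parallelness bookkeeping are fine, but those are the routine parts; as written, the proposal is a plan whose key lemma is both unproved and, in the local form you propose, unlikely to be provable.
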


Our reduction from NAE-3SAT uses the \emph{logic engine} of Eades and Whitesides \cite{EW96}.   
Fix an instance $I$ of NAE-3SAT; we assume that one clause is $x_n\vee \overline{x_n}$ because we can add this without affecting existence of a solution.  We first construct a graph $G_0$ and designate some edges as horizontal/vertical.
(Figure~\ref{fig:SP_redux_1} in the appendix shows $G_0$, while Figure~\ref{fig:SP_redux_3} shows the graph derived from it.)
Start with the \emph{frame} (orange) which consists of three paths connecting two vertices $t,b$; the \emph{middle path} has 
$H:= m+2n+1$ vertical edges, while the two \emph{outer paths} have $2n$ horizontal, $H$ vertical, and then another $2n$ horizontal edges each.
Next add the \emph{armature of $x_i$} (light/dark cyan) for each variable $x_i$, which consists of two paths that attach at the vertices of the middle path at distance $i$ from $t$ and $b$.    The paths are assigned to literals $x_i$ and $\overline{x_i}$ and consist of $2n+1-2i$ horizontal edges at both ends with $H-2i$ vertical edges inbetween.
The middle $m$ rows of our drawing are called the {\em clause-rows} and assigned to one clause each.   
Finally we attach \emph{flags} (green).   Namely, at the vertex where the armature of literal $\ell_i$ intersects the row of $c_j$, we attach a leaf (via a horizontal edge) if and only if $\ell_i$ does \emph{not} occur in $c_j$.     This finishes the construction of $G_0$.

\begin{figure*}[ht]
\hspace*{\fill}
\includegraphics[scale=0.8,page=2]{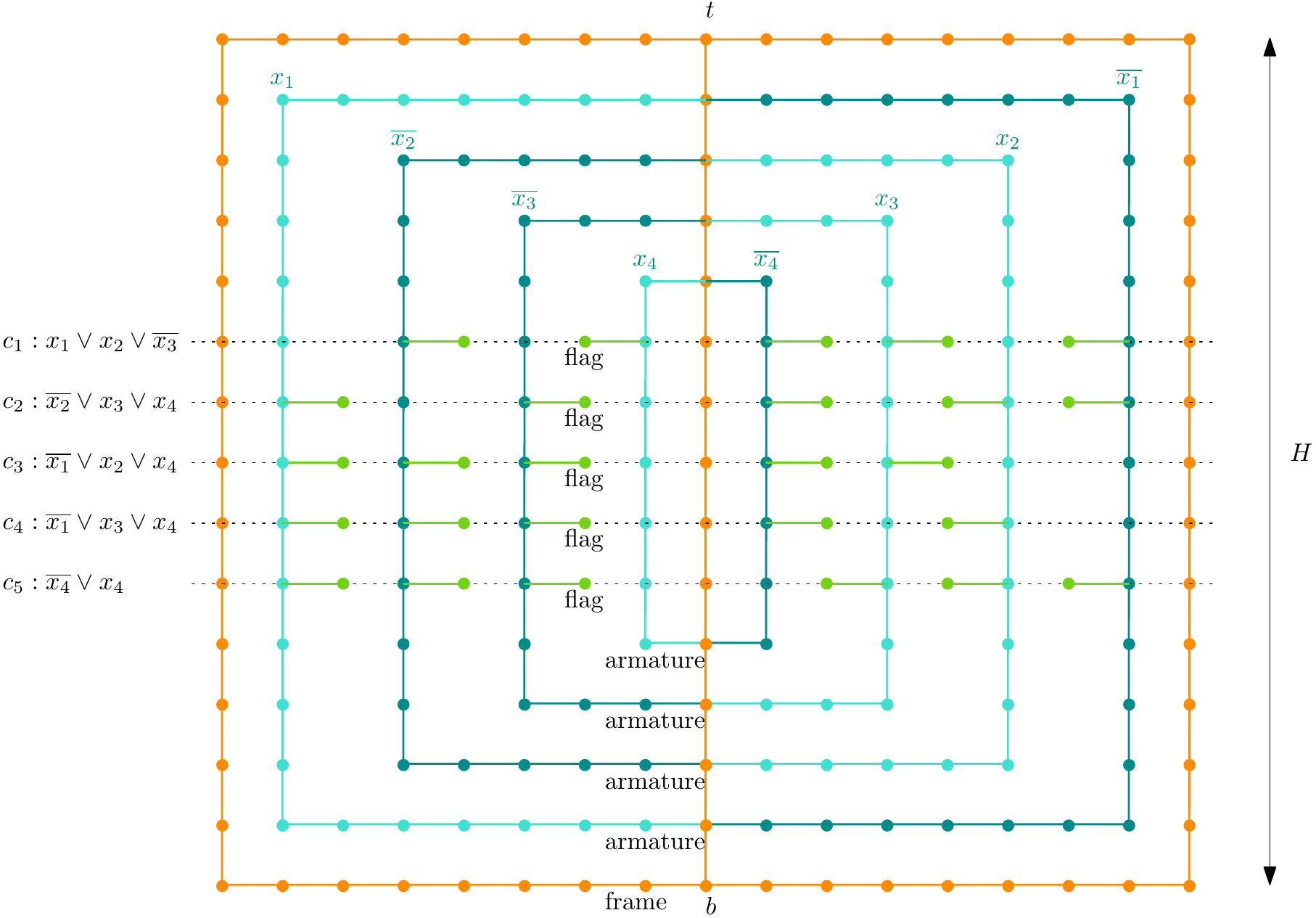}
\hspace*{\fill}
\caption{The reduction for row-treewidth. Bold edges indicate an attached $K_{2,5}$. Vertices of $G_0$ are solid.}  
\label{fig:SP_redux_3}
\label{fig:SP_redux_2}
\end{figure*}

Next we add more vertices and edges that force edge-orientations to be what we specified for $G_0$.
First, ``triple the width'': insert a new column
before and after every column that we had in our drawing of $G_0$, subdivide each horizontal edge of $G$,
and for every vertex $v$ with $k$ incident horizontal edges, add $2{-}k$ new leaves connected via horizontal edges.
(New vertices are hollow in \cref{fig:SP_redux_3}.) 
Next \emph{add an arrow-head} at some vertical edges $vw$.  Assuming $v$ is below $w$, this means adding the edges 
 $(v_\ell,w)$ and $(v_r,w)$, where $v_\ell,v_r$ are the two neighbours of $v$ adjacent to it via horizontal edges.    
We add arrow-heads at any vertical edge for which the lower endpoint $v$ does {\em not} belong to an armature. 
Call the result $G'$.
Finally we turn $G'$ into $G$ by \emph{adding a $K_{2,5}$} at every horizontal edge $uv$, i.e., adding
five new vertices that are adjacent to both $u$ and $v$.   
(To avoid clutter we do not show $K_{2,5}$ in \cref{fig:SP_redux_3}, but indicate it with a bold edge.)
Call the resulting graph $G$,
and verify that it is indeed a series-parallel graph.

One can argue (see the appendix) that if $G$ is embedded in $T{\boxtimes} P_\infty$ for some tree $T$, then all
edges with attached $K_{2,5}$ must be horizontal.   This in turn forces that $G'$ is actually embedded within
$P_\infty {\boxtimes} P_\infty$ (this is the hardest part).   The arrow-heads force the edges at which they are
attached to be vertical,
and with a counting-argument therefore the embdding of $G'$ implies an embedding of $G_0$ in $P_\infty \Box P_\infty$$P_\infty \Box P_\infty$
where the designated orientations are respected.   This is (with the standard logic engine argument) easily seen to
be equivalent to the NAE-3SAT instance having a solution.

The graph in our construction has maximum degree 16 and pathwidth $O(1)$, so computing the row treewidth remains
NP-hard even if we restrict the maximum degree or the pathwidth.

\begin{corollary}
{\sc RowTreewidth} is NP-hard, even for series-parallel graphs of bounded degree and pathwidth, even if we only want to know whether the row treewidth is 1.
\end{corollary}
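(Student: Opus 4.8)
The plan is to read the corollary off from \cref{thm:row-treewidth} once two properties of the reduction's output graph $G$ are verified: bounded maximum degree and bounded pathwidth. First I would record the definitional fact that a graph has row treewidth $\le 1$ exactly when it is a subgraph of $H\boxtimes P_\infty$ for some $H$ of treewidth $\le 1$, i.e.\ a forest; joining the components into a single tree $T$ creates no new embeddings, so this is precisely the question ``$G\subset T\boxtimes P_\infty$'' decided in \cref{thm:row-treewidth}. Hence that theorem already gives NP-hardness of the $k=1$ case of {\sc RowTreewidth} for series-parallel inputs, and the only remaining work is to confirm that the instances $G$ can simultaneously be taken to have bounded degree and bounded pathwidth.

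For the degree bound I would trace the incidences created by each construction step. In $G_0$ every vertex is a frame, middle-path, armature, or flag vertex and carries only a constant number of edges; tripling the width merely subdivides horizontal edges and tops each vertex up to at most two horizontal incidences; an arrow-head at a vertical edge $vw$ adds exactly two edges, both at $w$; and attaching a $K_{2,5}$ at a horizontal edge raises the degree of each of its two endpoints by five. Since after tripling each vertex meets at most two horizontal edges, the $K_{2,5}$ gadgets contribute at most $10$, and summing the worst case over all steps gives a constant (the value $16$ quoted in the construction). This is a purely local count and presents no difficulty.

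The substantive part is the pathwidth bound, and this is where I expect the main obstacle. The danger is that the armatures are nested and their vertical portions run in \emph{parallel} through the $m$ clause-rows, so a naive top-to-bottom or left-to-right sweep crosses all $n$ armatures at once and would only give width $\Theta(n)$. The key idea I would use is a \emph{non-monotone} path decomposition that handles the middle path and the armatures in nested order. Writing $M_0=t,M_1,\dots,M_H=b$ for the middle path, I would introduce its attachment points from the outside inward, interleaving top and bottom as $M_0,M_H,M_1,M_{H-1},\dots,M_n,M_{H-n}$, and only then process the clause-row core $M_{n+1},\dots,M_{H-n-1}$, which is a single path of pathwidth $1$. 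At the step where $M_i$ and $M_{H-i}$ become consecutive, I would insert a chain of bags walking along both literal-paths of armature $x_i$ while carrying only $M_i$ and $M_{H-i}$. Because distinct armatures are treated at disjoint positions of the decomposition, only a constant number of strands are ever simultaneously active even though they overlap geometrically, so the width stays $O(1)$. The two outer frame paths are absorbed as an outermost layer, flags are leaves added when their support vertex lies in a bag, the width-tripling subdivision vertices and extra leaves are handled like any other path or leaf, the five common neighbours of each $K_{2,5}$ fold into the bag holding the two endpoints of the edge they decorate, and the arrow-head edges only join vertices lying within bounded distance of a single vertical edge and are covered by locally enlarging bags; each of these raises the width by only $O(1)$.

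I would then conclude by combining the three ingredients: by \cref{thm:row-treewidth} the $k=1$ case of {\sc RowTreewidth} is NP-hard for series-parallel graphs, and by the degree and pathwidth bounds the hard instances can be taken to have bounded degree and pathwidth, so the problem remains NP-hard under these restrictions. The one genuinely delicate point to get right is the claim that only $O(1)$ armatures are active at once in the nested decomposition --- in particular checking that carrying $M_i,M_{H-i}$ along the inserted chain forces no earlier or later armature strand to remain in the bag, so that the single long-range dependency threaded through the clause-rows (the carried endpoint $M_{H-n}$) stays bounded. Once that is verified, the pathwidth bound, and with it the corollary, follows.
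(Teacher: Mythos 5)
Your proposal is correct and takes essentially the same route as the paper: the corollary is obtained by combining \cref{thm:row-treewidth} (NP-hardness of deciding $G \subset T\boxtimes P_\infty$ for series-parallel $G$, which is exactly row treewidth $\le 1$) with the observation that the constructed graph has maximum degree $16$ and pathwidth $O(1)$. The paper merely asserts these two bounds in one sentence, whereas you verify them explicitly --- your degree count matches the paper's value of $16$, and your nested, outside-in path decomposition (carrying only $M_i, M_{H-i}$ while walking each vertex-disjoint armature, then sweeping the clause-row core) is a sound way to establish the pathwidth bound the paper leaves unproved.
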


An similar construction shows that testing whether $G \subset T\Box P_\infty$ for some tree $T$ is also NP-hard.   Namely,
use the same construction ($G_0$ to $G'$ to $G$), except omit the diagonal edges and replace `$K_{2,5}$' by `three paths of length 2'.
This forces all `horizontal' edges to have the desired orientation in any embedding of $G$ in $T\Box P_\infty$.   Argue
as above that then $G$ lies within $\pi\Box P_\infty$ for a path $\pi$.   Therefore any `vertical' edge $uv$ must have this
orientation, because both $u,v$ have two incident horizontal edges.   So this gives an embedding of $G_0$ in the grid that respects
the given orientation, hence a solution to NAE-3SAT.

\section{Inapproximability}

It is not known whether the treewidth or pathwidth of a graph may be approximated to within a constant factor in polynomial time, but the impossibility of doing so is known to follow from a standard assumption in computational complexity theory, the small set expansion conjecture~\cite{WuAusPit-JAIR-14}, and the best approximation ratio known for a polynomial-time approximation algorithm for the treewidth is $O(\sqrt{\log w})$, where $w$ is the treewidth~\cite{FeiHajLee-SICOMP-08}.

As we now show, the same hardness results apply to the approximation of row treewidth and row pathwidth:

\begin{theorem}
If there exists an approximation algorithm for row treewidth, row pathwidth, layered treewidth, or layered pathwidth with approximation ratio $\rho$, then there exists an approximation algorithm for treewidth or pathwidth (respectively) with approximation ratio at most $3\rho$. As a consequence, the small set expansion conjecture implies that $\rho$ cannot be $O(1)$.
\end{theorem}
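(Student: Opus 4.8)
The plan is to reduce treewidth (resp.\ pathwidth) approximation to the corresponding row/layered version using a single gadget: adding one \emph{universal vertex}. Given an input graph $G$ whose treewidth I want to approximate, I would form $G^+$ by adding a vertex $u$ adjacent to every vertex of $G$. This has two effects. First, it changes the target parameter only additively: $\operatorname{tw}(G^+)=\operatorname{tw}(G)+1$ and $\operatorname{pw}(G^+)=\operatorname{pw}(G)+1$ (put $u$ in every bag of an optimal decomposition of $G$). Second, and crucially, $G^+$ has diameter at most $2$, so in any layering every vertex lies within one layer of $u$; hence every layering of $G^+$ uses at most three nonempty (consecutive) layers.

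The heart of the argument is a sandwich inequality relating the ``row/layered'' parameter $\mu$ to the ordinary parameter $\nu$ on $G^+$, for each of the four pairs $(\mu,\nu)\in\{(\text{row tw},\text{tw}),(\text{row pw},\text{pw}),(\text{layered tw},\text{tw}),(\text{layered pw},\text{pw})\}$:
\[
  \mu(G^+)\ \le\ \nu(G^+)\ \le\ 3\,\mu(G^+)+2 .
\]
The left inequality is the trivial single-layer witness (take $H=G^+$ for the row parameters, or use one layer in a layered decomposition). For the right inequality I would take an optimal witness for $\mu(G^+)$ — a tree/path decomposition of a host $H$ with $G^+\subseteq H\boxtimes P_\infty$, or a layered decomposition of $G^+$ — and expand it into an ordinary decomposition of $G^+$. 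Because only three layers are used, each vertex of $H$ carries at most three vertices of $G^+$ on its $P$-extension, and each bag meets each of the $\le 3$ layers in at most $\mu(G^+)+1$ vertices; in both cases the replacement/merge yields bags of size at most $3(\mu(G^+)+1)$, i.e.\ width at most $3\mu(G^+)+2$. The same expansion preserves whether the decomposition tree is a path, so the argument applies uniformly to the treewidth and pathwidth versions.

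With the sandwich in hand, the approximation transfer is mechanical. Running the assumed $\rho$-approximation on $G^+$ produces a witness of width at most $\rho\,\mu(G^+)\le\rho\,\nu(G^+)$; expanding it as above and then deleting $u$ from every bag gives a tree/path decomposition of $G$ whose width is at most $3\rho\,\nu(G^+)+2=3\rho\,(\nu(G)+1)+2$. Since the target treewidth/pathwidth may be assumed large (small fixed values are decidable exactly), the additive terms are of lower order and the achieved ratio is at most $3\rho$. Finally, because the small set expansion conjecture rules out an $O(1)$-approximation for treewidth and pathwidth~\cite{WuAusPit-JAIR-14}, a constant-factor $\rho$ for any of the four row/layered parameters would yield a constant-factor $3\rho$ approximation for treewidth or pathwidth, a contradiction; hence $\rho$ cannot be $O(1)$.

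I expect the main obstacle to be the clean proof of the right-hand sandwich inequality — in particular, verifying that the three-layer structure forced by the universal vertex really does convert a width-$w$ host decomposition (resp.\ layered decomposition) into a width-$(3w+2)$ ordinary decomposition, uniformly across the strong-product/row definitions and the layered definitions, and checking that the path structure of the decomposition is preserved in the pathwidth cases. The handling of the small additive constants (from the universal vertex and from the ``$+2$'') is routine but must be stated explicitly, since the claimed ratio $3\rho$ is exact only up to lower-order terms.
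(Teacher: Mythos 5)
Your proposal is correct and follows essentially the same route as the paper: add a universal vertex to force every layering of $G^+$ into at most three layers, observe the sandwich between the row/layered parameter and the ordinary treewidth/pathwidth of $G^+$ by merging the three layers' bags, and transfer the $\rho$-approximation with a factor-$3$ loss. Your write-up is in fact slightly more careful than the paper's about the additive constants (the ``$+2$'' and the $+1$ from the universal vertex), which the paper's proof glosses over.
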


\begin{proof}
Let $G$ be a graph for which we wish to approximate the treewidth or pathwidth, let $w$ be its treewidth or pathwidth, and form graph $G^+$ with treewidth or pathwidth $w+1$ by adding a universal vertex to $G$. The universal vertex forces every layering of $G^+$ to use at most three layers. $G^+$ has a trivial layering with one layer and row treewidth or row pathwidth $w+1$. Any other layering has row treewidth, row pathwidth, layered treewidth, or layered pathwidth at least $w+1/3$, because it gives a tree decomposition for $G^+$ with bags that are the unions of bags in three layers. Therefore, any approximation for the row treewidth, row pathwidth, layered treewidth, or layered pathwidth of $G^+$ gives an approximation for the treewidth or pathwidth of $G^+$, and therefore of $G$, with approximation ratio increased by at most a factor of three.
\end{proof}

Note that the constructed graph $G^+$ is not necessarily planar.   In fact, for planar graphs there are $O(1)$-approximation algorithms for the treewidth \cite{KT16}.

\section{Outlook}

In this paper, we proved that computing graph parameters such as the row pathwidth and row treewidth are NP-hard to compute, even under strong restrictions on the input graph. In fact, most of these restrictions rule out hopes for fixed-parameter tractability (or at least the possibility of finding polynomial-time algorithms in special situations). 
%
%
We do state here a few possibilities of situations where finding an embedding may be polynomial, but this mostly remains for future work:

\begin{itemize}
\item Give a graph with bounded radius, is it possible to solve {\sc RowTreewidth} or {\sc RowPathwidth} in polynomial time?  In all our hardness constructions, the graph had radius $\Theta(n)$. Bounded radius forces any layering to use a bounded number of rows, so if the row treewidth or row pathwidth is also bounded, then the treewidth or pathwidth of the original graph must also be bounded, but it is not obvious how to take advantage of this in an algorithm.  

Note that {\sc GridEmbedding} \emph{is} polynomial for graphs of bounded radius, because a graph can be embedded in a grid only if it has bounded maximum degree, and together with bounded radius this would imply bounded size, hence a constant-time algorithm.
\item For the results for {\sc GridEmbedding} (\hspace{0.1pt}\cite{BC87} and our construction in Claim~\ref{cl:constrainedEmbedding}), we very much needed the ability to change the embedding of the graph, so that we could flip armatures and flags.    What is the status if the embedding is fixed?
In particular, is testing whether a tree can be embedded in a grid NP-hard if the embedding of the tree is fixed, possibly similar to the results in~\cite{ALP22}?   
\end{itemize}

One could also ask for results for planar graphs with a fixed embedding where faces have small degrees, for example triangulated planar graphs.   In all our constructions, some faces have degree $\Theta(n)$.
Can we solve any of the problems (but especially {\sc KingGraphEmbedding}) for 
triangulated planar graphs?
This remains open, but we can make some progress if additionally also the treewidth is small.


\begin{theorem}
Let $G$ be a planar graph with treewidth $t$ and a planar drawing $\Gamma$ where all faces have degree at most $\Delta$.   Then we can test whether $G$ can be embedded in the grid (in a way that respects embedding $\Gamma$) in time $O^*(n^{3(t{+}1)\Delta})$, i.e., in polynomial time if $t\cdot \Delta\in O(1)$.
\end{theorem}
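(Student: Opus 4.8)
The plan is to reduce the grid-embedding question to a dynamic program over a tree decomposition of $G$, exploiting that bounded radius or bounded treewidth combined with bounded face size severely constrains the possible grid positions of vertices relative to one another. First I would fix the given planar drawing $\Gamma$ and note that a grid embedding respecting $\Gamma$ assigns integer coordinates to each vertex such that adjacent vertices land on grid-neighbouring points and the rotation system of $\Gamma$ is preserved. The key structural observation is that in a grid embedding, the total ``winding'' around any face is controlled, so once we fix the embedded position of a few vertices on a face, the positions of the remaining vertices of that face are determined up to a bounded number of choices governed by $\Delta$. I would make this precise by arguing that a face of degree at most $\Delta$ occupies a region whose boundary is a closed walk of length $\le\Delta$ in the grid, and there are only polynomially many (in fact $O(n^{O(\Delta)})$) ways the coordinates along such a boundary can be realized once the coordinates of the bag-vertices are pinned down relative to a common origin.

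The main engine is a tree-decomposition dynamic program. I would take a tree decomposition of $G$ of width $t$ and process it bottom-up, where the partial solution stored at a bag $B$ records the relative grid coordinates of the vertices in $B$ (relative to some canonical vertex of $B$, to kill the translation symmetry). Since a grid embedding has bounded ``stretch'' across any single bag only if we further restrict the coordinate range, the crucial point is to bound how many distinct coordinate-assignments to a bag of size $t{+}1$ need to be enumerated. Here the face-degree bound enters: because every face has degree $\le\Delta$ and the decomposition has width $t$, the vertices in a bag that are mutually constrained through shared faces can only spread out over a window of size $O(\Delta)$ in each coordinate, so the number of relevant relative coordinate vectors for a bag is $O(n^{O((t+1)\Delta)})$, matching the claimed $O^*(n^{3(t+1)\Delta})$ running time. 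The dynamic program then checks, for each pair of compatible states at a parent and child bag, that the overlap vertices receive consistent coordinates and that the drawing $\Gamma$'s rotation system is locally respected; forgetting and introducing vertices is handled by the usual join/project operations, and a state is feasible iff it extends to a consistent assignment on the whole subtree.

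I would establish correctness in two directions: any valid grid embedding respecting $\Gamma$ induces a consistent family of bag-states, and conversely any accepting run of the dynamic program can be stitched into a global coordinate assignment, where the stitching uses that a tree decomposition's separators let us glue partial embeddings that agree on the shared vertices, and the planarity constraint from $\Gamma$ together with the bounded face degree guarantees the glued pieces do not overlap in the plane. The running-time bound then follows by multiplying the number of states per bag, $O(n^{O((t+1)\Delta)})$, by the $O(n)$ bags and the polynomial cost of testing compatibility of two states.

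The hardest part, I expect, will be the overlap/non-overlap argument: showing that consistent relative coordinates on all bags genuinely assemble into a \emph{non-self-overlapping} grid embedding, rather than merely a locally-consistent coordinate labelling in which two far-apart vertices might accidentally collide at the same grid point. Controlling such global collisions is exactly what the face-degree bound is meant to handle — it forces the embedding to be ``locally rigid'' so that coordinates cannot drift unboundedly — but turning this intuition into a clean proof that the state space can be bounded while still detecting all global collisions is the delicate step, and it is where I would expect the technical weight of the argument to lie.
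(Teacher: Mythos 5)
There is a genuine gap, and it is exactly the one you flag at the end: your dynamic program over a tree decomposition of $G$ with \emph{relative} bag coordinates has no mechanism to detect global collisions, i.e., two vertices in different subtrees of the decomposition landing on the same grid point, nor to certify that the glued pieces form a plane drawing respecting $\Gamma$. The ``local rigidity'' intuition you invoke does not close this hole, and the supporting claim is in fact false: vertices sharing a bag of a tree decomposition need not be close in any grid embedding (a bag can contain vertices at embedded distance $\Theta(n)$), so the state space is not confined to an $O(\Delta)$ window --- and your own counting reflects this inconsistency, since an $O(\Delta)$-window would give roughly $\Delta^{O(t)}$ states per bag, not the $O(n^{O((t+1)\Delta)})$ you then assert. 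As written, the proposal is a locally-consistent labelling argument, which is precisely what you concede is insufficient.

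The paper takes a different and shorter route that sidesteps this issue structurally rather than combinatorially. It reuses the Biedl--Vatshelle algorithm for {\sc PointSetEmbedding} \cite{BV13}, which runs dynamic programming not over a tree decomposition of $G$ but over a \emph{carving decomposition of the dual graph}: this hierarchically splits $G$ into pieces whose boundaries are small (of size governed by $(t{+}1)\Delta$), and the DP states record \emph{absolute} placements of these boundary vertices into the point set $S$, here taken to be the $n\times n$ grid, so $|S|=n^2$ and the $O^*(|S|^{1.5(t+1)\Delta})$ bound of \cite{BV13} becomes $O^*(n^{3(t+1)\Delta})$; the only modification needed is to restrict boundary placements to ones drawn along grid edges (with diagonals). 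The collision problem you could not resolve disappears because the host graph is planar and the embedding must respect $\Gamma$: the drawn boundary of each piece is a closed curve in the plane that separates the piece's drawing from everything else, so distinct pieces are confined to disjoint regions and cannot overlap. The paper explicitly notes that this is the step that breaks when the host graph is not planar --- which is a good indication that any correct proof must use planarity of the host in the way the carving-decomposition approach does, and that your primal tree-decomposition DP with translation-normalized states would need a substantially new idea to be repaired.
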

\begin{proof}
In 2013, the first author and Vatshelle \cite{BV13} studied the {\sc PointSetEmbedding} problem, where we are give a set of points $S$ and a planar graph $G$, and we ask whether $G$ has a planar straight-line drawing where all vertices are placed at points of $S$.   They showed that if $G$ has treewidth at most $t$ and face-degree at most $\Delta$, then {\sc PointSetEmbedding} can be solved in $O^*(|S|^{1.5(t{+}1)\Delta})$ time.   Their approach is to use a so-called carving decomposition of the dual graph, which results in a hierarchical decomposition of $G$ into ever smaller subgraphs $H$ (ending at one face) for which the \emph{boundary} (the vertices of $H$ that may have neighbours outside $H$) has small size.   The main idea to solve {\sc PointSetEmbedding} is then to do dynamic programming in this carving decomposition, and the parameter for the dynamic program is all possible embeddings of the boundary of $H$ in the given point set $S$.

To adapt this algorithm to our situation, we need two changes.   First, we fix the point set $S$ to be the points of an $n\times n$-grid.   (Clearly no bigger grid can be required.)   In particular, we have $|S|=n^2$.   Second, when considering possible embeddings of the boundary of $H$, we \emph{only} consider such embedings where this boundary is drawn along edges of the grid with diagonals.    With this restriction, the same dynamic program will test whether a grid embedding exists in the desired time.
\end{proof}

Sadly this approach only works if the host graph is planar. Otherwise, the boundary of a subgraph does not separate its drawing from the rest.

\bibliographystyle{plainurl}
\bibliography{refs}

\appendix
\newpage

\section{Missing details from \cref{sec:grid}}
\label{sec:grid-details}

We first give a proof of \cref{obs:BoxVsBoxtimes}: Any graph $G$ can be modified into a graph $G'$ such that $G$ has an embedding in $P_\infty \Box P_\infty$ if and only if $G'$ has an embedding in $P_\infty {\boxtimes} P_\infty$.

\begin{proof}
    The forward direction is obvious:
    If $G \subset P_\infty \Box P_\infty$, then take the embedding, rotate it by $45^\circ$ and stretch it such that neighboring grid vertices are $5\sqrt{2}$ units apart.
    Place this in $P_\infty{\boxtimes} P_\infty$ and verify that each $T(v)$ can be placed, and for each edge of $G$ the two respective degree-$4$ can be connected as in \cref{fig:grid-graph}.

    For the other direction, assume $G'$ has an embedding in $P_\infty{\boxtimes} P_\infty$.
    Observe that for any vertex $v$ in $G$, the set $S_v = \{ w \in V(G') \colon \dist(v,w) \leq 2\}$ has size $|S_v| = 1 + 8 + 16 = 25$, and thus $S_v$ occupies a $5 \times 5$ square area $A_v$ in $P_\infty {\boxtimes} P_\infty$.
    For any edge $e=uv$ in $G$, the corresponding $5$-path $u$-$s_1$-$s_2$-$s_3$-$s_4$-$v$ must be embedded along five diagonals of $P_\infty{\boxtimes} P_\infty$ with the same slope.
    This holds as $s_2$ has four neighbors outside $S_u$ ($s_3$ and three vertices of $T(u) \setminus S_u$) and thus must be on a corner of $A_u$; and symmetrically $s_3$ lies on a corner of $A_v$.    Finally $(s_2,s_3)$ must be diagonal (and have the same slope), otherwise there would not be six vertices of $P_\infty {\boxtimes} P_\infty$ that are outside $S_u\cup S_v$ but adjacent to $s_2$ or $s_3$.
\end{proof}

Next we sketch (in \cref{fig:pathwidth-2-reduction})
how to take the specific tree from the NP-hardness construction from \cite{GSZ21}, and directly construct a tree $T'$ of pathwidth 2 that has an embedding in $P_\infty {\boxtimes} P_\infty$ if and only if $T$ has an embedding in $P_\infty \Box P_\infty$.   Thus {\sc KingGraphEmbedding} is NP-hard even for trees of pathwidth 2.

\begin{figure*}[ht]
    \centering
    \includegraphics{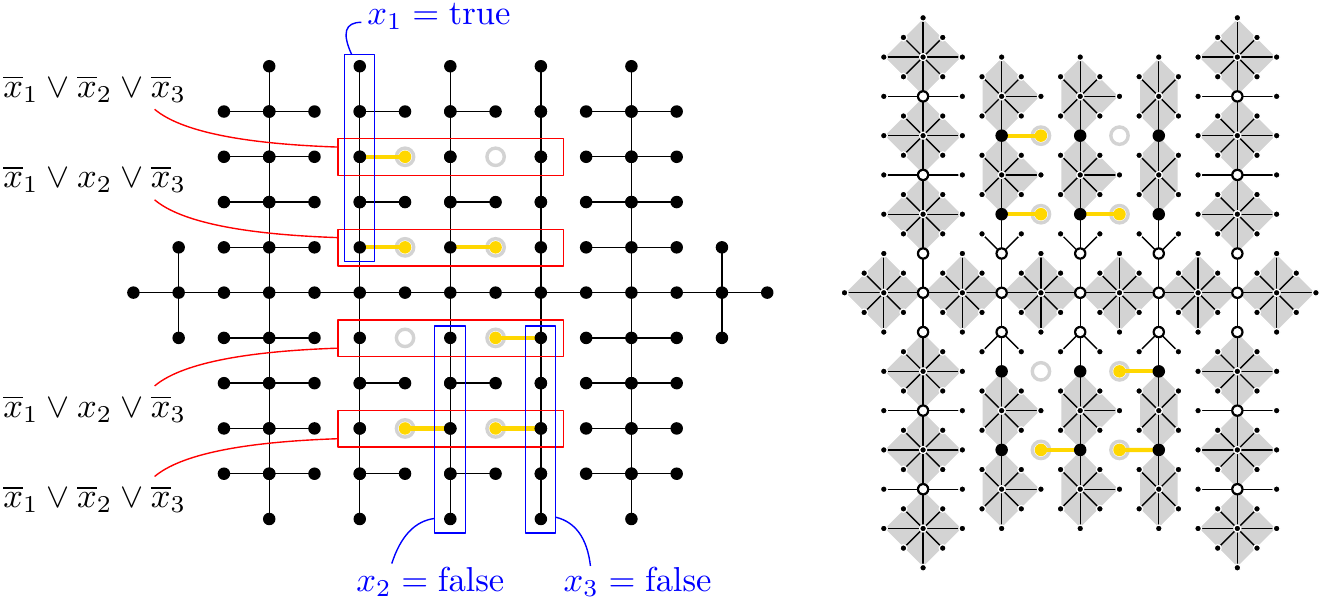}
    \caption{
        Left: The tree of pathwidth $2$ of Gupta et al.~\cite{GSZ21} for the NAE-SAT instance $\varphi = (\overline{x}_1 \vee \overline{x}_2 \vee \overline{x}_3) \wedge (\overline{x}_1 \vee x_2 \vee \overline{x}_3)$ in its {\sc GridEmbedding} for solution $\{x_1 = \text{true}, x_2 = \text{false}, x_3 = \text{false}\}$.
        Right: A corresponding tree of pathwidth $2$ in its corresponding king's graph embedding.
\todo[inline]{TB: Time permitting, I'd like to fiddle with this figure.   This is also the logic engine, so ideally the parts would use the same color-scheme as we did in the other NP-hardness proof, and the armatures and clause-rows (which now become clause-curves) should be indicated.   But I don't have energy for this right now.}
    }
    \label{fig:pathwidth-2-reduction}
\end{figure*}

\section{Row treewidth}
\label{sec:rowTW-details}

We prove here \cref{thm:row-treewidth}:
It is NP-hard to test whether a graph $G$ is a subgraph of $T{\boxtimes} P_\infty$ for some tree $T$,  even for a series-parallel graph $G$.
We already sketched the construction in Section~\ref{sec:rowTW}; we repeat the full construction here for ease of reading.

The reduction is from NAE-3SAT and uses the \emph{logic engine} by Eades and Whitesides \cite{EW96}.   We first show NP-hardness of a closely related problem.
Assume that with a graph $G$, we are also given labels `hor' and `ver' on some of its edges.  We say that an embedding of $G$ in $T{\boxtimes} P_\infty$ is 
\emph{orientation-constrained} if the edges marked `hor/ver' are horizontal
and vertical, respectively.
(Recall that horizontal/vertical means that the two endpoints of the edge have the same $T$-projection/$P$-projection.)

\begin{claim}
\label{cl:constrainedEmbedding}
Consider the following problem: `Given a graph $G$ with labels hor/ver on some edges,
does it have an orientation-constrained embedding in $T{\boxtimes} P_\infty$ for some tree $T$?'    This is NP-hard, even for a series-parallel bipartite graph $G$.
\end{claim}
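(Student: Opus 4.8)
The plan is to give a reduction from NAE-3SAT built around the logic engine of Eades and Whitesides~\cite{EW96}, reusing the construction of $G_0$ (frame, armatures, flags) already described and depicted in \cref{fig:SP_redux_3}, and labelling its edges so that every edge of a vertical spine segment is marked \emph{ver} and every horizontal segment and every flag edge is marked \emph{hor}. Since the orientation labels are now \emph{given} as part of the input, I do not need the $K_{2,5}$ and arrow-head gadgets of the full theorem; the entire burden is to show that $G_0$ admits an orientation-constrained embedding in $T\boxtimes P_\infty$ for some tree $T$ if and only if the NAE-3SAT instance is satisfiable. I would first record the two structural properties. The graph $G_0$ is \emph{bipartite}: its only cycles arise from an armature literal-path together with the spine segment joining its two endpoints $s_i$ and $s_{H-i}$, and these have even length $2(2n{+}1{-}2i)+2(H{-}2i)$. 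And $G_0$ is \emph{series-parallel}, being assembled by parallel composition of the three frame paths between $t$ and $b$, then of the (nested) armature path-pairs between pairs of spine vertices, with flags attached as pendant leaves.

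For the easy direction (satisfiability $\Rightarrow$ embedding) I would take $T$ to be a path, so that $T\boxtimes P_\infty$ contains the king's grid, place the frame as a rigid rectangle, flip the armature of each variable according to its truth value (true $\Rightarrow$ its $x_i$-path occupies the right window, false $\Rightarrow$ the left), and route the flags into the two windows. This is the standard logic-engine layout; every marked edge comes out horizontal or vertical, so the embedding is orientation-constrained.

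The hard direction (embedding $\Rightarrow$ satisfiability) is where the work lies, and the main obstacle is ruling out that the freedom to choose an arbitrary tree $T$ (or to route unmarked edges diagonally) lets the armatures escape the rigid grid layout. I would argue \emph{spine-pinning}. A maximal \emph{ver}-path is embedded injectively at a fixed $P$-coordinate and therefore traces a genuine path in $T$; the middle frame path traces a $T$-path of length $H$ from the $T$-projection $s_0$ of $t$ to the $T$-projection $s_H$ of $b$, and each outer frame path contributes a \emph{ver}-segment of the same length $H$ joining the same two $T$-vertices. Since the path between $s_0$ and $s_H$ in a tree is unique, all three frame spines trace the \emph{same} path $s_0 s_1\cdots s_H$, merely at $P$-coordinates $0$ and $\pm 2n$. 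The decisive point is that each armature literal-path joins the spine vertices $s_i$ and $s_{H-i}$ while its \emph{ver}-segment has length exactly $H-2i$, which equals the tree-distance between $s_i$ and $s_{H-i}$; hence that segment is forced onto the spine sub-path $s_i\cdots s_{H-i}$, with no branch of $T$ available. A \emph{hor}-segment lies inside a single $P$-extension, which is a path, so as a simple path it is monotone and pins the armature to column $\pm(2n{+}1{-}2i)$, and the two literal-paths of a variable must occupy opposite sides to avoid coinciding. Thus the whole embedding is grid-like: it lies in $\pi\boxtimes P_\infty$ for the spine path $\pi$ and uses only horizontal and vertical edges.

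Having reduced to a grid layout, I would read off the assignment from the flip state of each armature and invoke the standard logic-engine argument: the flags placed in a clause-row can be packed into the two windows without overlap precisely when the corresponding clause has both a satisfied and a falsified literal, i.e.\ is not-all-equal satisfied. Combining the two directions yields the equivalence, and since the construction is polynomial and produces a series-parallel bipartite graph, the claim follows. The rigidity step—spine-pinning together with the monotonicity of horizontal segments—is the crux, as it is exactly what replaces, in the tree-product setting, the bounded-grid rigidity used in the classical logic engine.
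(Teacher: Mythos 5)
Your proposal is correct and follows essentially the same route as the paper: the identical logic-engine construction $G_0$ (frame, armatures, flags, all edges labelled), the same easy direction, and the same key rigidity fact that any orientation-constrained embedding must lie in $\pi \boxtimes P_\infty$ for a single tree-path $\pi$, proved via uniqueness of paths in trees combined with the exact vertical-edge counts of the construction. The only difference is organizational—the paper gets rigidity in one stroke by observing that every vertex of $G_0$ lies on a monotone $t$-to-$b$ walk using exactly $H$ vertical edges, whose $T$-projection must therefore be the unique $t'$-$b'$ path, whereas you pin the frame spines, armature segments, and flags piece by piece—but both versions rest on the same distance-counting principle.
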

\begin{proof}
Let $I$ be an instance of NAE-3SAT with $n$ variables and $m$ clauses.
We construct $G$ and at the same time discuss possible orientation-constrained embeddings of $G$ in the grid (i.e., in $P_\infty {\boxtimes} P_\infty$),
see also \cref{fig:SP_redux_1}.  (Since we restrict {\em all} edges to be horizontal or vertical, it does not matter whether the grid includes
the diagonals or not.)
Start with the \emph{frame} (orange in the figure) which consists of three paths connecting two vertices $t,b$; the \emph{middle path} has 
$H:= m+2n+1$ vertical edges, while the two \emph{outer paths} have $2n$ horizontal, $H$ vertical, and then another $2n$ horizontal edges each.
An orientation-constrained embedding of the frame in the grid is unique up to symmetry.   The middle $m$ rows of this embedding are called the \emph{clause-rows} and marked 
with one clause each.

\begin{figure*}[ht]
\hspace*{\fill}
\includegraphics[scale=0.7,page=1]{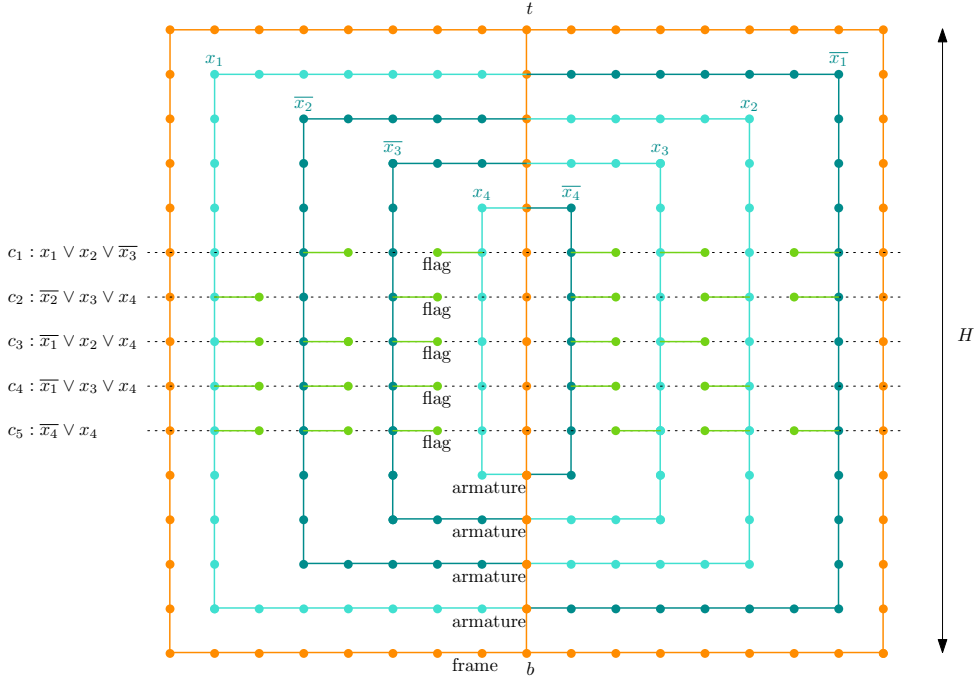}
\hspace*{\fill}
\caption{The reduction for row-treewidth   if we can fix the orientation of edges. 
}
\label{fig:SP_redux_1}
\end{figure*}

Next we add the \emph{armature of $x_i$} (light/dark cyan) for each variable $x_i$. This consists of two paths that attach at the vertices of the middle path at distance $i$ from $t$ and $b$.    Each path consist of $2n+1-2i$ horizontal edges at both ends with $H-2i$ vertical edges inbetween.
The paths are assigned to literals $x_i$ and $\overline{x_i}$.  An orientation-constrained embedding of frames and armatures in the grid is unique up to symmetry
and up to horizontally flipping each armature; in particular the row of each vertex is unchanged over all such embeddings.

Finally we attach \emph{flags} (green) at the intersections of armatures and clause-rows.   Namely, at the vertex where the armature of literal $\ell_i$ intersects the row of $c_j$, we attach a leaf (via a horizontal edge) if and only if $\ell_i$ does \emph{not} occur in $c_j$.    For each flag we have the choice of whether to place it to the right or to the left of its attachment vertex, as long as this spot has not been used by a different flag already.     Graph $G$ is clearly series-parallel, because we can reduce it to an edge by deleting leaves and multiple edges and contracting degree-2 vertices.   (We remind the reader of the following equivalent definitions of series-parallel graphs: (a) Connected graphs without a $K_4$-minor, (b) connected graphs of treewidth 2, (c) graphs obtained from an edge by attaching leaves and duplicating or subdividing edges, (d) connected graphs for which all 3-connected components contain at most three vertices.)

If $I$ has a solution, then flip the armatures such that the left paths correspond to the literals of the solution.   For each clause $c_j$ there exists at least one true literal, hence there are at most $n-1$ flags in the row of $c_j$ and left of the middle path; we can arrange them as to fit within the gaps.   There also exists at least one false literal, hence at most $n-1$ flags in the row of $c_j$ to the right of the middle path.   So we can find an orientation-constrained embedding of $G$ in the grid.   Vice versa, if we have such an embedding, then taking the literals that are left of the middle path gives a solution to $I$ because for each clause $c_j$ there must be at most $n-1$ flags on each side of the middle path, so at least one literal is true and at least one literal is false.   

So $I$ has a solution if and only if $G$ has an orientation-constrained embedding in the grid.
To finish the NP-hardness, we must argue that any orientation-constrained embedding of $G$ in $T{\boxtimes} P_\infty$ for some tree $T$ actually must reside within a grid. To see this, let $\pi$ be the path in $T$ that corresponds to the $T$-projection of one outer path of the frame.   Since the edge-orientations on the outer path are fixed, $\pi$ has length $H$ and connects the $T$-projections $t',b'$ of $t$ and $b$, so $t',b'$ have distance $H$ in $T$.   We claim that the embedding of $G$ actually resides within $\pi{\boxtimes} P_\infty$, i.e., for any vertex $v$ of $G$ the $T$-projection $v'$ of $v$ is on $\pi$.   To show this, observe that we can find a path from $t$ to $v$ by walking through the frame, then (perhaps) an armature and then (perhaps) along a flag, and always only go downward.   Similarly find a path from $v$ to $b$ that only goes downward.   The combined walk $\sigma_v$ from $t$ to $b$ via $v$ uses exactly $H$ non-horizontal edges.   The $T$-projection $\sigma_v'$ of $\sigma_v$ connects $t'$ to $b'$ and has length $H$, which by uniqueness of paths in trees implies that $\sigma_v'=\pi$ contains $v'$.   
\end{proof}

To prove \cref{thm:row-treewidth}, we take the construction of Claim~\ref{cl:constrainedEmbedding},
but add more vertices and edges to obtain a graph $G$ for which edge-orientations are
forced in any embedding of $G$ in $T{\boxtimes} P_\infty$.     

So assume that we are given an instance $I$ of NAE-3SAT.   We may assume that one clause of $I$ is $x_n\vee \overline{x_n}$, for if there is no such clause, then we can add it without affecting the solvability of $I$.    Now let $G_0$ be the graph constructed for instance $I$ as in the proof of \cref{cl:constrainedEmbedding}.
As before, $G_0$ has a unique orientation-constrained embedded in the grid up to horizontal flipping of armatures and flags, so the $y$-coordinates of vertices are fixed.   We call the vertices and edges of $G_0$ \emph{original}.


As our next step, we ``triple the width''. Roughly speaking, we insert a new column 
before and after every column that we had in the drawing of $G_0$.   Formally (and explained on the graph, rather than 
the drawing), subdivide every horizontal edge twice, and at any vertex $v$ incident to $k$ horizontal edges, attach $2-k$
leaves.   All new edges are again required to be horizontal.   
See \cref{fig:SP_redux_2}, ignoring bold lines and
diagonal edges for now. The resulting graph $G_1$
likewise has an orientation-constrained embedding in the grid if and only if the NAE-3SAT
instance has a solution.   It also clearly is series-parallel since it is obtained from $G_0$ by subdividing edges and attaching leaves.

Next we obtain $G_2$ by \emph{adding an arrow-head} at some vertical edges $vw$.  Assuming $v$ is below $w$, this means adding the edges 
 $(v_\ell,w)$ and $(v_r,w)$, where $v_\ell,v_r$ are the two neighbours of $v$ adjacent to it via horizontal edges.    
We add arrow-heads at any vertical edge of $G_1$ for which the lower endpoint $v$ does {\em not} belong to an armature. The graph stays series-parallel since each arrow-head $\{v_\ell,v,v_r,w\}$ contains a cutting pair that separates it from the rest of the graph, so adding the edges of the arrow-heads does not affect whether there are non-trivial 3-connected components.    

For the final modification we need a simple but crucial observation, which one proves by inspecting the
neighbourhood of two adjacent vertices in $T{\boxtimes} P_\infty$ for all possible orientations of the edge between them.

\begin{observation}
Let $G$ be a graph embedded in $T{\boxtimes} P_\infty$ for some tree $T$.
If $uv$ is an edge of $G$ for which $u,v$ have at least five common
neighbours, then $uv$ must be horizontal.
\end{observation}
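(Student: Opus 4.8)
The plan is to prove the contrapositive: if $uv$ is vertical or diagonal, then $u$ and $v$ have at most four common neighbours, so having at least five rules these out and forces $uv$ to be horizontal. Throughout I write each vertex of $T{\boxtimes}P_\infty$ as a pair $(x,p_i)$ with $x\in V(T)$, and I use that $(x,p_i)$ and $(y,p_j)$ are adjacent exactly when $y$ lies in the closed neighbourhood $N_T[x]=\{x\}\cup N_T(x)$ and $|i-j|\le 1$ (and the two vertices are distinct). The one structural fact I will need is that $T$ is a tree and hence triangle-free: if $xy\in E(T)$ then $x$ and $y$ have no common neighbour in $T$, so that $N_T[x]\cap N_T[y]=\{x,y\}$.

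First I would dispatch the two non-horizontal orientations, whose endpoints always sit on adjacent vertices $x,y$ of $T$. For a vertical edge $u=(x,p_i)$, $v=(y,p_i)$, a common neighbour $(z,p_k)$ must satisfy $z\in N_T[x]\cap N_T[y]$ and $|k-i|\le 1$; the tree fact forces $z\in\{x,y\}$ and $k\in\{i-1,i,i+1\}$, giving six candidates, from which $u$ and $v$ themselves must be discarded, leaving exactly four. For a diagonal edge $u=(x,p_i)$, $v=(y,p_{i+1})$, the two constraints $|k-i|\le1$ and $|k-(i+1)|\le1$ restrict $k$ to $\{i,i+1\}$, and again $z\in\{x,y\}$; this yields four candidates minus the two endpoints, hence exactly two common neighbours. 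In both cases the count is at most four, as required.

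For completeness, and to see that five is the right threshold, I would also record the horizontal case. For $u=(x,p_i)$, $v=(x,p_{i+1})$ a common neighbour is any $(z,p_k)$ with $z\in N_T[x]$ and $k\in\{i,i+1\}$ other than $u$ and $v$, so there are $2|N_T[x]|-2=2\deg_T(x)$ of them. Thus a horizontal edge has five or more common neighbours precisely when its shared $T$-projection has degree at least three in $T$, whereas the non-horizontal cases can never reach five; this is exactly the gap the observation exploits.

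The whole argument is a finite case check, so I do not expect a genuine obstacle. The only care needed is bookkeeping: correctly identifying which of the enumerated candidate vertices coincide with $u$ or $v$ and so must not be counted, and applying triangle-freeness of $T$ to collapse $N_T[x]\cap N_T[y]$ to $\{x,y\}$ in the vertical and diagonal cases. The conceptual point worth emphasising is that it is precisely this tree-collapse that caps the non-horizontal counts at four, while the horizontal count is unbounded because there the two endpoints share a $T$-projection and no common-neighbour restriction inside $T$ applies.
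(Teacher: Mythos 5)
Your proof is correct and follows essentially the same route as the paper, which disposes of the observation by ``inspecting the neighbourhood of two adjacent vertices in $T\boxtimes P_\infty$ for all possible orientations of the edge between them''---exactly your three-case check, with the key point (triangle-freeness of $T$ collapsing $N_T[x]\cap N_T[y]$ to $\{x,y\}$ for the vertical and diagonal cases) made explicit. Your count of $2\deg_T(x)$ common neighbours in the horizontal case is a nice additional remark confirming that the threshold of five is tight.
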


Thus, we turn $G_2$ into $G$ by \emph{adding a $K_{2,5}$} at every horizontal edge $uv$ of $G_2$, i.e., adding
five new vertices that are adjacent to both $u$ and $v$.   This keeps the graph series-parallel and force $uv$
to be horizontal in any embedding of $G$ in $T{\boxtimes} P_\infty$.
To avoid clutter we do not show $K_{2,5}$ in \cref{fig:SP_redux_2}, but indicate it with a bold edge.   

This ends the description of our construction.   It should be straightforward to see that a solution to the NAE-3SAT instance $I$ implies that $G$ can be embedded in $C_\infty {\boxtimes} P_\infty$. Namely, we can embed $G_0$ in $\pi{\boxtimes} P_\infty$ where $\pi$ is the spine of $C_\infty$, subdivide each edge of $P_\infty$ twice to embed $G_1$, realize the arrow-heads along diagonals, and finally use 5 legs at each vertex of $\pi$ to embed the attached $K_{2,5}$'s on their $P$-extensions.   Vice versa, assume that $G$ is embeded in $T{\boxtimes} P_\infty$ for some tree $T$.   We know that all bold edges must be horizontal.   We also claim that if $vw$ was a vertical edge of $G_1$ that received an arrow-head, then the orientation of $vw$ in the embedding is vertical.   To see this, assume that the arrow-head was $\{v_\ell,v,v_r,w\}$, with $v_\ell,v_r$ connected to $v$ via horizontal edges.
Then $vw$ belongs to two triangles $\{v_\ell,v,w\}$ and $\{v_r,v,w\}$, and the two horizontal edges $(v_\ell,v)$ and $(v_r,v)$ of these triangles share endpoint $v$.   No two such triangles exist at a diagonal edge, and $vw$ cannot be horizontal since the two horizontal edges at $v$ are $vv_\ell$ and $vv_r$.   
So $vw$ is vertical.

We claim that the embedding of $G_2$ in $T{\boxtimes} P_\infty$ actually resides within $\pi{\boxtimes} P_\infty$ for some path $\pi$, i.e., in a grid.    This is argued almost exactly as in \cref{cl:constrainedEmbedding}.   Let $\pi$ be the $T$-projection of one outer path of $G_0$; since the orientations of the edges on the outer path is fixed $\pi$ has length $H$.   For any vertex $v$ of $G_2$, we can find a walk $\sigma_v$ from $t$ to $b$ via $v$ that uses exactly $H$ non-horizontal edges (they may now be diagonal). As before this implies that the $T$-projection of $v$ is also in $\pi$, so the embedding of $G_2$ is within $\pi{\boxtimes} P_\infty$, i.e., the king's graph.    As before, we can hence associate vertices of $G_2$ with points in $\mathbb{N}\times \mathbb{N}$, and speak of rows and columns of this embedding.

Since the orientations of edges on the outer paths are fixed, the drawing of the outer paths is fixed up to symmetry and spans $12n+3$ columns (including the space for the arrowheads).    The rest of $G_2$ must lie inside the outer-paths, so in particular the row of clause $x_n\vee \overline{x_n}$ (which we call the \emph{spacer-row}) has $12n+3$ points that could host vertices.    But there are three paths of the frame, $2n$ armature-paths and $2(n-1)$ flags in this row, meaning $4n+1$ original vertices use the spacer-row. Since we tripled the width, all $12n+3$ possible points in the spacer-row are used in the embedding of $G_2$.   Furthermore, the vertices in the spacer-row come as triplets connected by horizontal edges, with the middle vertex the original vertex.   Up to a translation therefore all original vertices in the space-row have $x$-coordinate divisible by 3. 
This forces any original spine-vertex $w$ to have $x$-coordinate divisible by 3 as well, because we can get from $w$ to an original vertex $v$ in the spacer-row 
using only edges that must be vertical (due to an arrow-head) or horizontal (due to a $K_{2,5}$), and the horizontal parts have length divisible by 3.   
In consequence,
all edges on the middle path of the frame must be vertical, even those that do not have an arrowhead on them.%
\footnote{This argument would be simplified if we added arrow-heads everywhere, but then the graph would not be series-parallel.}
With this, the embedding of $G$ implies an embedding of $G_1$ in $\pi{\boxtimes} P_\infty$ that is orientation-constrained, 
and we can hence extract a solution to the NAE-3SAT instance as Claim~\ref{cl:constrainedEmbedding}.
This finishes the proof of \cref{thm:row-treewidth}]

\section{Row treedepth}
\label{sec:rowTD}

%
Recall that $S_\infty$ (the infinite \emph{star})  is the tree
that consists of one \emph{center}  that is adjacent to all other vertices (the \emph{leaves}), with 
no restriction on its number of leaves.

\begin{theorem}
\label{thm:starEmbedding}
It is NP-hard to test whether a tree is a subgraph of $S_\infty{\boxtimes} P_\infty$,
even for a tree of pathwidth 2.
\end{theorem}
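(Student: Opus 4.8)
The plan is to reduce from NAE-3SAT using a logic engine, exactly as in the proofs of \cref{cl:constrainedEmbedding} and \cref{thm:row-treewidth}, but with the host specialised to $S_\infty{\boxtimes}P_\infty$ and with the entire construction kept a tree. First I would record the structure of $S_\infty{\boxtimes}P_\infty$. Writing $c$ for the centre of $S_\infty$ and $\ell_1,\ell_2,\dots$ for its leaves, the $P$-extension of $c$ is a single path, the \emph{centre column}, whose vertices have unbounded degree; the $P$-extension of each $\ell_k$ is a \emph{leaf column}, a path whose vertices have degree at most $5$ (two column-neighbours and the three centre vertices at heights $i-1,i,i+1$). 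Distinct leaf columns are pairwise non-adjacent, and each leaf column touches the host only through the centre column. Two consequences drive the reduction: (i) any vertex of degree at least $6$ must be placed on the centre column — the same degree threshold already exploited in \cref{obs:PVsC}; and (ii) a connected pendant subtree that avoids the centre column must lie inside a single leaf column, hence is a monotone path.

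Using (i), I would build a \emph{backbone} of degree-$\ge 6$ vertices that is forced onto the centre column; since the centre column is a simple path, the backbone is pinned to a straight vertical segment of fixed height $H$, giving the rigid ``frame'' of the logic engine. For each variable I would attach a long armature; by (ii) it must run into a private leaf column and extend monotonically either upward or downward from its attachment height, and I would read this up/down choice as the truth value. Clauses would be placed at designated \emph{clause heights} of the backbone, and for each occurrence of a literal in a clause I would attach a \emph{flag} that needs a free leaf-column slot at that height. The flags of a clause are made to compete for slots in exactly the leaf columns of the clause's literals, so that all flags can be placed simultaneously if and only if the three literals are not all equal. As in the earlier proofs, pathwidth $2$ is maintained because everything outside the backbone is built from pendant paths and legs.

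The main obstacle is precisely the feature that distinguishes this case from row pathwidth: in \cref{obs:PVsC} the caterpillar's spine is a \emph{path}, so its $P$-extension is a two-dimensional king graph and the standard logic engine transplants directly, whereas the star's centre is a single \emph{vertex}, so its $P$-extension is only one-dimensional, and the leaf columns — the only source of a second dimension — are mutually non-adjacent and unboundedly many. The usual clause mechanism, in which flags compete for the bounded horizontal space of a clause row, therefore has no direct analogue: there is always another empty leaf column. The crux is to engineer the flag gadget so that its feasibility genuinely encodes the NAE condition despite this unboundedness; I expect to do this by forcing the flags and the armature of each literal into one common leaf column via tree-connectivity and then invoking the monotone-path constraint of (ii), so that which heights remain free depends on the armature's direction. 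The second ingredient is a rigidity argument, analogous to the ``every $t$-$b$ walk uses exactly $H$ non-horizontal edges'' argument of \cref{cl:constrainedEmbedding}, showing that any embedding of the tree in $S_\infty{\boxtimes}P_\infty$ must place the backbone on the centre column and route each armature monotonically into a single leaf column. Establishing this rigidity, rather than the satisfiability bookkeeping, is where I expect the real work to lie.
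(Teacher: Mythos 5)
Your structural observations (i) and (ii) are correct, and observation (i) is exactly the forcing tool the paper also uses (its frame vertices carry six pendant leaves so that degree forces them onto the centre column); the backbone-pinning step is sound. The genuine gap is in the clause gadget, which you yourself identify as the crux, and your proposed resolution is inconsistent with your own observation (ii). You want to force ``the flags and the armature of each literal into one common leaf column,'' but an armature path with flags attached at internal vertices is not a path, and by (ii) any pendant subtree avoiding the centre column \emph{must} be a path; so this configuration simply cannot be confined to a leaf column. Concretely, if an armature occupies $(\ell_k,p_{h-1}),(\ell_k,p_h),(\ell_k,p_{h+1})$, then a flag attached at the middle vertex has only the three centre-column vertices $(c,p_{h-1}),(c,p_h),(c,p_{h+1})$ available, since the two leaf-column neighbours are taken by the armature itself. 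Flags therefore necessarily spill onto the centre column, where there is exactly one vertex per height, and the competition you need becomes a packing problem along that single path --- not a competition ``for slots in the leaf columns of the clause's literals.'' A second, related hole: the two literal paths of a variable sit in distinct, mutually non-adjacent leaf columns, so nothing forces them to take complementary up/down directions; in the planar logic engine this complementarity comes from the rigidity of a single armature that flips as a unit, and that rigidity has no counterpart here. As described, then, nothing in the construction actually encodes the NAE condition.

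For contrast, the paper's proof abandons the logic engine altogether, precisely because $S_\infty{\boxtimes}P_\infty$ is essentially one-dimensional: the centre column is the only bounded shared resource. It reduces from the strongly NP-hard \emph{3-partition} problem, which is natively a one-dimensional packing problem. A long frame path of degree-forced vertices occupies the centre column while leaving $n$ \emph{group-gaps} of exactly $B$ free centre vertices each (plus long \emph{blockers} that confine everything, and \emph{fold-gaps} that absorb the routing slack of long pendant paths); each input number $a_i$ contributes a \emph{paddle} whose \emph{blade} of $a_i$ degree-forced vertices must land on consecutive free centre vertices. An exact counting argument then shows the blades can be packed if and only if the 3-partition instance is solvable. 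If you tried to repair your plan, you would end up redesigning your clauses as centre-column packing constraints --- at which point you are re-deriving a packing reduction, and a numeric problem like 3-partition, rather than NAE-3SAT, is the natural source problem for this geometry.
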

\begin{proof}
We use a reduction from \emph{3-partition}, where the input is a multi-set $A=\{a_1,\dots,a_{3n}\}$  that we want to split into $n$ groups that all
sum to the same integer $B=\tfrac{1}{n} \sum_{i=1}^{3n} a_i$.   
This is strongly NP-hard \cite{GJ79}, i.e., it remains NP-hard even if $A$ is encoded in unary.
We may assume that all input-numbers are multiples of 8 (otherwise multiply all of them by 8; this does not affect NP-hardness).
We describe the construction of our tree $T$ and at the same time also argue what any embedding $\Gamma$ of $T$ in $S{\boxtimes} P$ must look like.   In $S_\infty {\boxtimes} P_\infty$, we call the $P$-extension of the center $c$ the \emph{center-row}; as in \cref{obs:CvsP} we use a degree-argument to force many vertices of $T$ to be in the center-row, and finding enough space to hold all of them is the crucial idea for our reduction.  

\begin{figure*}[ht]
\centering
\includegraphics[width=0.7\linewidth,page=2]{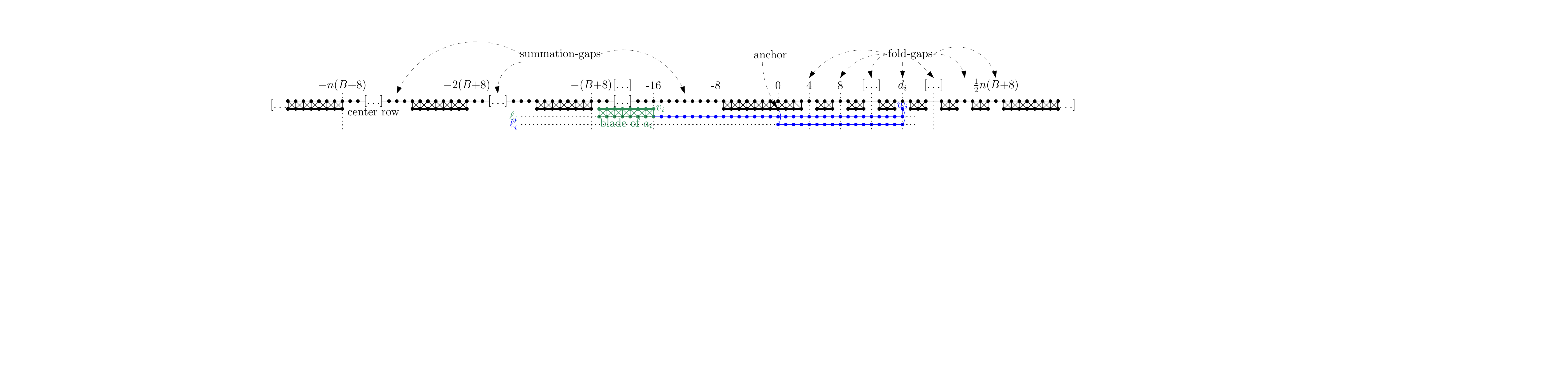}
\caption{NP-hardness of embedding in $S{\boxtimes} P$,   figure is not to scale. Filled dots represent $c$-vertices (hence have 6 leaves attached).  We only show two paddles, one green and one blue.
}
\label{fig:rowStarWidth}
\end{figure*}   

Tree $T$ consists of a \emph{frame} as well as a \emph{paddle} for each $a_i$, $i=1,\dots,3n$.   The frame is a very long path, with most vertices on the path having 6 leaves attached.   (These leaves are not shown in our picture.)   The vertices with attached leaves are called {\em $c$-vertices} and are forced to be on the central row since all other vertices of $S{\boxtimes} P$ have degree 5.  All other vertices of the frame are called {\em $\ell$-vertices} because they could be on a \emph{leaf-row} (the $P$-extension of a leaf of $S_\infty$). The specific spacing along the path is as follows:
\begin{itemize}
\item Begin with $n(B{+}8)$ $c$-vertices (the {\em left blocker}).
	Since $c$-vertices must be on the central row, and no two central-row
	vertices are adjacent unless they are consecutive, this path (and 
	similarly any path of $c$-vertices used below) occupies
	a consecutive set of vertices on the central row.
\item Continue with $B$ $\ell$-vertices, followed by $8$ $c$-vertices.
	The $\ell$-vertices could be on a leaf-row, 
	hence keep up to $B$ vertices of the central row unused.
	We call this a {\em group-gap}.   
\item We create $n$ consecutive group-gaps (in \cref{fig:rowStarWidth}, $n=2$).
\item The last vertex $Z$ of the last group-gap is called the {\em anchor}; the paddles (defined below) will attach at $Z$.
\item Starting at $Z$, we alternate between three $c$-vertices and one $\ell$-vertex that together define one \emph{fold-gap} (it permits to omit one center-row vertex).   There are $\tfrac{1}{8}n(B+8)$ fold-gaps.
\item Finally we finish with $n(B{+}8)$ $c$-vertices (the {\em right blocker}).
\end{itemize}
Note that the left and right blocker are so long that no sub-path of $\ell$-vertices could extend beyond them; in particular this forces all $c$-vertices that are not in the blockers to be between them in the central row.   

Now for each $a_i\in A$, we define the \emph{$a_i$-paddle}. This starts at anchor $Z$, continues with a path (the \emph{handle}) that has $n(B{+}8){-}1$ $\ell$-vertices, and culminates at the \emph{blade}, which consists of $a_i$ $c$-vertices. The handle is not long enough to extend beyond the blockers, so the $c$-vertices of the blade must be at $a_i\geq 2$ consecutive central-row vertices between the blockers.   Since each fold-gap leaves at most one central-row vertex free, the blade must hence occupy central-row vertices left free by a group-gap.   There are at most $nB$ such central-row vertices in $\Gamma$, and they come in blocks of at most $B$ consecutive central-row vertices each.   By $\sum_{i=1}^{3n} a_i = nB$, it follows that in any realization $\Gamma$ the group-gaps leave exactly $n$ blocks of exactly $B$ central-row vertices each, and the blades exactly fill these gaps, hence giving the desired partition of $A$.

We must still argue that if there is a solution to 3-partition, then we can embed $T$ in $S{\boxtimes} P$, and for this, need the fold-gaps and $\Theta(n)$ leaves 
for star $S$.   Embed first the frame as in the picture, so all gaps leave the maximal possible number of central-row vertices free.   (We also use 6 leaf-rows, not shown here, to embed the leaves attached at $c$-vertices.)
We treat the center-row as if it were the $x$-axis with $Z$ at the origin; this defines an $x$-coordinate $x(\cdot)$ for all embedded vertices with $x(Z)=0$.
Embed the blades of $a_1,\dots,a_{3n}$ in the group-gaps according to the solution to 3-partition.   For $i=1,\dots,3n$, 
let $v_i$ be the rightmost central-row vertex of the blade of $a_i$.    
To place the handle, we use two further leaf-rows, say $\ell_i'$ and $\ell_i''$.
We go from $v_i$ diagonally rightward to $\ell_i'$, 
then rightward for $|x(v_i)|-1$ edges to reach $x$-coordinate ${-}1$. Hence we could now go to the anchor diagonally, but the handle is longer
than this.   Therefore we continue rightward for another $d_i:=\tfrac{1}{2}(n(B+8)-|x(v_i)|)$ edges along $\ell_i'$.
Recall that each $a_i$ (and hence also $B$) is divisible by $8$.   Since there are 8 $c$-vertices at each group-gap, and all group-gaps are completely filled by paddles, $x$-coordinate $x(v_i)$ is also divisible by 8.
Thus $d_i$ is divisible by 4, and the vertex $w_i'$ that we reach
is one unit left of the central-row vertex $w_i$ of some fold-gap.   Go diagonally from $w_i'$ to $w_i$, and from there diagonally back to $x(w_i')$ on the other leaf-row $\ell_i''$.  Then we go leftward along leaf-row $\ell_i''$ to
$x$-coordinate 1 and then diagonally to $Z$.   In total we have used 
$|x(v_i)|-1+2d_i=n(B+8)-1$ vertices, which is exactly the length of the handle.
Observe that vertex $w_i$ cannot have been used by a different paddle (say the $a_j$-paddle) because $v_j\neq v_i$ are distinct central-row vertices, and 
their $x$-coordinates determine the fold-gap to be used.

Thus a solution to 3-partition gives an embedding of $G$ in $S{\boxtimes} P$ and vice versa and the problem is NP-hard.   Clearly we constructed a tree $T$; and if we removed the path $\pi$ that defined the frame then all components of $T\setminus \pi$ are either singleton-vertices (at $c$-vertices of the frame) or caterpillars (at the paddles).   Therefore $T$ has pathwidth 2.
\end{proof}

The same result also holds for embedding in $S\Box P$. We use exactly almost the same tree $T$, except at each gap of the frame the path of $\ell$-vertices is longer by two vertices and the handle-vertices have four more vertices.  Details are left to the reader.

Our constructed trees have pathwidth 2.   For a tree $T$ of pathwidth 1, the answer to `is $T\subset S_\infty {\boxtimes} P_\infty$' is trivial because the answer is always `Yes':   Such a tree is a subgraph of $C_\infty$, and $C_\infty$ can be embedded in $S_\infty {\boxtimes} P_\infty$ by placing the spine on the center-row.

\end{document}